\newcommand{\MaxNAND}{\textsc{Max NAND}}
\newcommand{\MinDist}{\textsc{Min Dist}}
\newcommand{\NCP}{\textsc{NCP}}
\newcommand{\F}{\mathbb{F}}
\newcommand{\tensor}{\otimes}
\newcommand{\etal}{{et al.}}
\theoremstyle{plain}
\newtheorem{theorem}{Theorem}[section]
\newtheorem{lemma}[theorem]{Lemma}
\newtheorem{corollary}[theorem]{Corollary}
\newtheorem{claim}[theorem]{Claim}
\theoremstyle{definition}
\newtheorem{definition}[theorem]{Definition}
\newtheorem{fact}[theorem]{Fact}
\newtheorem{remark}[theorem]{Remark}
\DeclareMathOperator{\Opt}{{\sf Opt}}
\newcommand{\C}{\mathcal{C}}
\begin{document}

\title{A Simple Deterministic Reduction for \\ the Gap Minimum Distance of Code Problem}
\author{Per Austrin\thanks{Research done while at New York University supported by NSF Expeditions grant CCF-0832795.}\\
  University of Toronto
  \and
  Subhash Khot\thanks{Research supported by NSF CAREER grant CCF-0833228, NSF Expeditions grant CCF-0832795,
and BSF grant 2008059.}  \\
  New York University 
}

\maketitle

\begin{abstract}
We present a simple deterministic gap-preserving reduction from SAT to
the Minimum Distance of Code Problem over $\F_2$.  We also show how to
extend the reduction to work over any finite field.  Previously a
randomized reduction was known due to Dumer, Micciancio, and Sudan
\cite{DMS}, which was recently derandomized by Cheng and Wan
\cite{CW1, CW2}. These reductions rely on highly non-trivial coding
theoretic constructions whereas our reduction is {\it elementary}.

As an additional feature, our reduction gives a constant factor
hardness even for asymptotically good codes, i.e., having constant
rate and relative distance.  Previously it was not known how to
achieve deterministic reductions for such codes.
\end{abstract}

\pagebreak

\section{Introduction}
The Minimum Distance of Code Problem over a finite field $\F_q$,
denoted $\MinDist(q)$, asks for a non-zero codeword with minimum
Hamming weight in a given linear code $C$ (i.e., a linear subspace of
$\F_q^n$). The problem was proved to be NP-hard by Vardy \cite{Vardy}.

Dumer, Micciancio, and Sudan \cite{DMS} proved that assuming RP $\ne$
NP the problem is hard to approximate within some factor $\gamma > 1$
using a {\it gap preserving} reduction from the Nearest Codeword
Problem, denoted $\NCP(q)$ (which is known to be NP-hard even with a
large gap). The latter problem asks, given a code $\tilde{C} \subseteq
\F_q^{m}$ and a point $p \in \F_q^m$, for a codeword that is nearest
to $p$ in Hamming distance.  However, Dumer \etal's reduction is
randomized: it maps an instance $(\tilde{C}, p)$ of $\NCP(q)$ to an
instance $C$ of $\MinDist(q)$ in a randomized manner such that: in
the YES Case, with high probability, the code $C$ has a non-zero
codeword with weight at most $d$, and in the NO Case, $C$ has no
non-zero codeword of weight less that $\gamma d$, for some fixed
constant $\gamma > 1$. We note that the minimum distance of code is
multiplicative under the tensor product of codes; this enables one to
{\it boost} the inapproximability result to any constant factor, or
even to an {\it almost polynomial factor} (under a quasipolynomial
time reduction), see \cite{DMS}.

The randomness in Dumer \etal's reduction is used for constructing, as
a gadget, a non-trivial coding theoretic construction with certain
properties (see Section~\ref{sec:comparison} for details).  In a
remarkable pair of papers, Cheng and Wan \cite{CW1, CW2} recently
constructed such a gadget deterministically, thereby giving a
deterministic reduction to the Gap $\MinDist(q)$ Problem.  Cheng and
Wan's construction is quite sophisticated.  It is an interesting
pursuit, in our opinion, to seek an {\it elementary} deterministic
reduction for the Gap $\MinDist(q)$ Problem.

In this paper, we indeed present such a reduction. For codes over
$\F_2$, our reduction is (surprisingly) simple, and does not rely on
any specialized gadget construction.  The reduction can be extended to
codes over any finite field $\F_q$; however, then the details of the
reduction becomes more involved, and we need to use Viola's recent
construction of a psedorandom generator for low degree polynomials
\cite{Viola}.  Even in this case, the resulting reduction is
conceptuelly quite simple.

We also observe that our reduction produces asymptotically good codes,
i.e., having constant rate and relative distance.  While Dumer
\etal~\cite{DMS} are able to prove randomized hardness for such codes,
this was not obtained by the deterministic reduction by Cheng and Wan.
In \cite{CW2}, proving a constant factor hardness of approximation for
asymptotically good codes is mentioned as an open problem.

Our main theorem is thus:

\begin{theorem} 
  \label{thm:main}
  For any finite field $\F_q$, there exists a constant $\gamma > 0$
  such that it is NP-hard (via a deterministic reduction) to
  approximate the $\MinDist(q)$ problem to within a factor
  $1+\gamma$, even on codes with rate $\ge \gamma$ and relative
  distance $\ge \gamma$ (i.e., asymptotically good codes).
\end{theorem}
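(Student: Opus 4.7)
The plan is to follow the general strategy of Dumer, Micciancio, and Sudan: reduce from the gap version of the Nearest Codeword Problem $\NCP(q)$, which is known to be NP-hard with an arbitrarily large constant gap via a gap-preserving reduction from SAT. Given an $\NCP(q)$ instance $(\tilde C\subseteq\F_q^m,\,p\in\F_q^m,\,t)$ with the promise that in the YES case $\mathrm{dist}(p,\tilde C)\le t$ and in the NO case $\mathrm{dist}(p,\tilde C)\ge \gamma t$, the template reduction to $\MinDist(q)$ adjoins $p$ to $\tilde C$ as an extra generator, tagged by one new coordinate: $C=\{(c+\alpha p,\,\alpha):c\in\tilde C,\,\alpha\in\F_q\}\subseteq\F_q^{m+1}$. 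A nonzero codeword of $C$ with $\alpha=0$ has weight at least the minimum distance of $\tilde C$, while one with $\alpha\neq 0$ has weight exactly $\mathrm{dist}(p,\tilde C)+1$ (using that $\tilde C$ is closed under scaling). Thus one obtains a gap-$\MinDist(q)$ instance precisely when $\tilde C$ itself already has minimum distance noticeably larger than $(1+\gamma)t$.

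The principal difficulty, and the heart of the problem, is constructing deterministically an $\NCP(q)$ instance whose underlying code $\tilde C$ already has minimum distance much larger than the YES-case threshold $t$. This is exactly the ``locally dense'' gadget that Dumer \etal\ produce randomly and that Cheng and Wan derandomize using heavy algebraic-geometric tools. My plan for $\F_2$ is to bypass any sophisticated gadget: starting from a gap-$\NCP(2)$ instance obtained from SAT via a PCP-based reduction, I massage $\tilde C$ using only explicit elementary transformations -- for instance concatenation or tensoring with an explicit asymptotically good binary code, combined with a correspondingly padded modification of the target point $p$ -- so as to boost the minimum distance above $(1+\gamma)t$ while essentially preserving the gap on $\mathrm{dist}(p,\tilde C)$. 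The binary setting is expected to be tractable because the set of $\F_2$-linear combinations of a few columns is restricted enough that a direct combinatorial verification suffices to rule out short codewords.

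For a general finite field $\F_q$ the naive counting breaks down, because short codewords can arise from many more linear combinations of generators. To recover the argument I would invoke Viola's pseudorandom generator against low-degree polynomials over $\F_q$: the condition ``no short nontrivial codeword in $\tilde C$'' can be encoded as a statement that a low-degree polynomial in the entries of the generator matrix does not vanish when restricted to any small, nonzero coefficient pattern, and such tests are precisely what Viola's PRG is designed to fool. Instantiating the generator matrix from the output of Viola's PRG then yields, deterministically, the required large-min-distance $\NCP(q)$ instance, and plugging it into the template reduction above produces the desired hard $\MinDist(q)$ instance.

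Finally, the asymptotic-goodness claim is obtained by verifying at each step that rate and relative distance remain bounded below by absolute constants: the initial $\NCP(q)$ hardness is available with constant rate; concatenation or tensoring with an explicit good code preserves both parameters; Viola's PRG incurs only polynomial overhead in the seed length; and the final tag-and-extend step adds only one coordinate per generator. The main obstacle throughout is the second step, namely the deterministic construction of the large-min-distance $\NCP(q)$ gadget, and the novelty over prior work is (i) the simple, gadget-free argument in the binary case and (ii) the replacement of Cheng--Wan's algebraic-geometric constructions by Viola's PRG for the extension to general $\F_q$.
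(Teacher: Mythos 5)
Your proposal is built on the Dumer--Micciancio--Sudan template (adjoin the target $p$ to $\tilde C$ with a tag coordinate, then arrange for $\tilde C$ to have large minimum distance), but this is not what the paper does, and the proposal has a serious unaddressed gap exactly where the difficulty lies. Making $\tilde C$ simultaneously (a) have minimum distance well above the YES threshold $t$ and (b) retain a hard-to-approximate nearest-codeword gap against $p$ is \emph{the} problem the DMS ``locally dense code'' gadget solves; it is not an elementary preprocessing step. Tensoring or concatenating $\tilde C$ with a good code $D$ does boost $d(\tilde C)$, but it replaces $\tilde C$ by a much larger code with a completely different ambient geometry, and there is no obvious way to pad $p$ so that $\mathrm{dist}(p,\tilde C\otimes D)$ or $\mathrm{dist}(p,\tilde C\circ D)$ inherits the original gap. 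The ``direct combinatorial verification'' you invoke for $\F_2$ is not sketched and does not obviously exist. Likewise, your proposed use of Viola's PRG -- fooling a ``low-degree polynomial test'' for the minimum distance of a generator matrix -- is not substantiated; minimum distance of a code generated by a matrix $G$ is not a single low-degree polynomial in the entries of $G$, and the PRG guarantee does not straightforwardly give you a generator matrix whose span avoids a Hamming ball.

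The paper in fact avoids NCP entirely. It reduces directly from \MaxNAND{}. For each NAND constraint it introduces a $4$-bit indicator function $S_{ijk}$ linked linearly to the variable values; the non-homogeneous equation $\sum_{a,b} S_{ijk}(a,b)=1$ is homogenized by replacing $1$ with a new variable $x_0$. To force any codeword with $x_0=0$ to have large weight, the paper introduces an auxiliary ``dummy'' system: a vector $y\in C$ (a good code), a matrix $Y\in C\otimes C$ intended to be $yy^\top$, and $4$-entry indicators $Z_{ij}$ for each pair, with linear constraints tying them all together, including $y=C(x)$, $Y$ symmetric with $\mathrm{diag}(Y)=y$. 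The key technical point is Lemma~\ref{lemma:zerodiag}: a symmetric zero-diagonal codeword of $C\otimes C$ has weight at least $d(C)^2(1+1/q)$, which is the extra factor $(1+1/q)$ needed to push the weight of $x_0=0$ codewords past the completeness bound. For $q\ge 3$, Viola's PRG is used in a structured way that differs from your sketch: it gives an explicit evaluation-point set $R$ so that the codes $P_e$ of degree-$\le e$ polynomial evaluations on $R$ both contain $\{x^e:x\in P_1\}$ and have relative distance $\approx 1-e/q$ (via Schwartz--Zippel plus the PRG guarantee); the reduction then stores matrices $Y^{e,f}\in P_e\otimes P_f$ for all $e,f$ and enforces consistency through indicator functions $Z_{ij}:\F_q^2\to\F_q$. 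The case analysis on the smallest $e+f$ with $Y^{e,f}\ne 0$, together with Lemmas~\ref{lemma:fdist_lowend} and~\ref{lemma:fdist_highend} bounding the support of each $Z_{ij}$, replaces the NCP-based counting you had in mind.
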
 

As noted before, the hardness factor can be boosted via tensor product
of codes (though after a superconstant amount of tensoring the code is
no longer asymptotically good):

\begin{theorem} 
  For any finite field $\F_q$, and constant $\epsilon > 0$, it is hard
  to approximate the $\MinDist(q)$ problem to within a factor
  $2^{(\log n)^{1-\epsilon}}$ unless $\text{NP} \subseteq
  \text{DTIME}(2^{(\log n)^{O(1)}})$.
\end{theorem}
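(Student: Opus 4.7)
The plan is to amplify the constant-factor gap produced by Theorem~\ref{thm:main} by iterated tensor products of codes, trading off instance size for gap. The single underlying fact I would use is the standard multiplicativity of minimum distance: if $C_1, C_2$ are linear codes over $\F_q$ of lengths $n_1, n_2$ and minimum distances $d_1, d_2$, then $C_1 \tensor C_2$ is a linear code of length $n_1 n_2$ and minimum distance exactly $d_1 d_2$, and a generator matrix for it is the Kronecker product of generator matrices of $C_1$ and $C_2$, so the construction is efficient and fully explicit. Iterating gives $C^{\tensor k}$ of length $n^k$ and minimum distance $d^k$.

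With this tool in hand, I would proceed as follows. Given a SAT instance of size $m$, first invoke Theorem~\ref{thm:main} to produce in $\mathrm{poly}(m)$ time a linear code $C$ of length $n = \mathrm{poly}(m)$ over $\F_q$ together with a threshold $d$, such that in the YES case some nonzero codeword has weight $\le d$, while in the NO case every nonzero codeword has weight $> (1+\gamma) d$. Replace $C$ by $C^{\tensor k}$ for a parameter $k$ to be fixed; by multiplicativity the new code has length $N = n^k$, a YES-weight at most $d^k$, and NO-weight strictly greater than $(1+\gamma)^k d^k$. Choose $k$ to be the smallest integer with $(1+\gamma)^k \ge 2^{(\log N)^{1-\epsilon}}$. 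A routine calculation, using $\log N = k \log n$, yields $k = \Theta\bigl((\log m)^{(1-\epsilon)/\epsilon}\bigr)$, whence $\log N = \Theta\bigl((\log m)^{1/\epsilon}\bigr)$ and the whole construction runs in time $N^{O(1)} = 2^{(\log m)^{O(1)}}$, i.e.\ quasi-polynomial in $m$. A polynomial-time algorithm approximating $\MinDist(q)$ within factor $2^{(\log N)^{1-\epsilon}}$ would therefore decide SAT in quasi-polynomial time, giving $\text{NP} \subseteq \text{DTIME}(2^{(\log n)^{O(1)}})$.

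I do not expect a serious obstacle here: the substantive step of producing a deterministic constant-factor gap has already been discharged by Theorem~\ref{thm:main}, and multiplicativity of minimum distance under tensor product is classical (the lower bound comes from viewing a nonzero element of $C_1 \tensor C_2$ as an $n_1 \times n_2$ matrix whose nonzero rows each lie in $C_2$ and whose row-support pattern lies in $C_1$; induction extends this to $C^{\tensor k}$). The only point that needs care is the calibration of $k$, which must be chosen so that the amplified gap strictly exceeds $2^{(\log N)^{1-\epsilon}}$ while keeping $N$ at most quasi-polynomial in $m$; this is a short arithmetic check of the sort performed above.
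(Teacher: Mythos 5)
Your proposal is correct and is exactly the approach the paper (implicitly) takes: the paper itself gives no detailed proof for this theorem, noting only that ``the hardness factor can be boosted via tensor product of codes'' after Theorem~\ref{thm:main}, and your tensor-power amplification with the calibration $k = \Theta\bigl((\log m)^{(1-\epsilon)/\epsilon}\bigr)$, $\log N = \Theta\bigl((\log m)^{1/\epsilon}\bigr)$ fills in precisely the routine calculation the paper leaves to the reader.
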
 

Another motivation to seek a new deterministic reduction for
$\MinDist(q)$ is that it might lead to a deterministic reduction for
the analogous problem for integer lattices, namely the Shortest Vector
Problem (SVP). For SVP, we do not know of a deterministic reduction
that proves even the basic NP-hardness, let alone a hardness of
approximation result. All known reductions are randomized
\cite{Ajtai1, CN1, Mic1, Kh-svp, KhotSVP, HavivRegev}.  In fact, the
reduction of Dumer \etal~\cite{DMS} giving hardness of approximation
for $\MinDist(q)$ assuming NP $\ne$ RP is inspired by a reduction by
Micciancio~\cite{Mic1} for SVP.

Our hope is that our new reduction for $\MinDist(q)$ can be used to
shed new light on the hardness of SVP.  For instance, it might be
possible to combine our reductions for $\MinDist(q)$ for different
primes $q$ so as to give a reduction over integers, i.e., a reduction
to SVP.

\subsection{Previous Reductions}
\label{sec:comparison}

On a high level, the idea of the reduction of Dumer \etal~\cite{DMS}
is the following.  We start from the hardness of approximation for
$\NCP$.  Given an instance $(\tilde{C}, p)$, let us look at the code
$C = \text{span}(\tilde{C} \cup \{ p \})$.  Then any codeword of $C$
which uses the point $p$ must have large distance.  However, it can be
that the code $\tilde{C}$ itself has very small distance so that the
minimum distance of $C$ is unrelated to the distance from $p$ to
$\tilde{C}$.  Loosely speaking, the idea is then to combine $C$ with
an additional code $C'$ such that any codeword which does not use $p$
must have a large weight in $C'$.

Let us briefly describe the gadget of \cite{DMS}.  They use a coding
theoretic construction with the following properties (slightly
restated).  Let $\frac{1}{2} < \rho < 1$ be a fixed constant and $k$
be a growing integer parameter. The field size $q$ is thought of as a
fixed constant.

\begin{enumerate}
\item $C^* \subseteq \F_q^\ell$ ~is a linear code with distance $d$,
  where $\ell$ is polynomial in $k$ (think of $\ell = k^{100}$).
\item There is a ``center'' $v \in \F_q^\ell$ such that the ball of
  radius $r$ around $v$, denoted $B(v,r)$, contains $q^{k}$ codewords
  and $r = \lfloor \rho d \rfloor$. In notation, $| B(v,r) \cap C^* |
  \geq q^k$.
\item There is a linear map $T: \F_q^\ell \mapsto \F_q^{k'}$ such that
  the image of $B(v,r) \cap C^*$ under $T$ is the full space
  $\F_q^{k'}$. Here $k'$ is polynomial in $k$ (think of $k' =
  k^{0.1}$).
\end{enumerate}

Dumer \etal ~achieve such a construction in a randomized manner. They
let $C^*$ be a suitable concatenation of Reed-Solomon codes with the
Hadamard code so that even a {\it typical} ball of radius $r$ contains
many (i.e., $q^k$) codewords.  Hence choosing the center $v$ at random
satisfies the second property.  They show further that a random linear
map $T$ satisfies the third property.  By giving a deterministic
construction of such a gadget, Cheng and Wan \cite{CW1,CW2} recently
derandomized the reduction of \cite{DMS}.

\subsection{Organization}

We present a proof of this theorem for the binary field in Section
\ref{sec:2} and for a general finite field in Section
\ref{sec:q}. Even for the binary case, it is instructive to first see
a reduction to $\NCP(2)$ in Section \ref{sec:nc} which is then
extended to the $\MinDist(2)$ problem in Section \ref{sec:mindist2}.

\section{Preliminaries}

\subsection{Codes}

Let $q$ be a prime power.

\begin{definition} A linear code $C$ over a field $\F_q$ is a linear subspace
of $\F_q^n$, where $n$ is the block-length of the code and dimension of the
subspace $C$ is the dimension of the code. The distance of the code $d(C)$ is
the minimum Hamming weight of any non-zero vector in $C$.
\end{definition}

The two problems $\MinDist(q)$ and $\NCP(q)$ are defined as follows.

\begin{definition}
  $\MinDist(q)$ is the problem of determining the
  distance $d(C)$ of a linear code $C \subseteq \F_q^n$. The code may
  be given by the basis vectors for the subspace $C$ or
  by the linear forms defining the subspace.
\end{definition}

\begin{definition}\label{def:nc} 
  $\NCP(q)$ is the problem of determining the
  minimum distance from a given point $p \in \F_q^n$ to any codeword
  in a given code $C \subseteq \F_q^n$.  Equivalently, it is the
  problem of determining the minimum Hamming weight of any point $z$
  in a given affine subspace of $\F_q^n$ (which would be $C - p$).
\end{definition}

Our reduction uses tensor products of
codes, which are defined as follows.

\begin{definition}  Let $C_1, C_2 \subseteq \F_q^n$
  be linear codes.  Then the linear
  code $C_1 \tensor C_2 \subseteq \F_q^{n^2}$ is defined as the set of all
  $n \times n$ matrices over $\F_q$ such that each of its columns is a
  codeword in $C_1$ and each of its rows is a codeword in $C_2$.
\end{definition}

A well-known fact is that the distance of a code is multiplicative under the
tensor product of codes.

\begin{fact}
  \label{fact:tensordistance}
  Let $C_1, C_2 \subseteq \F_q^n$ be linear codes.  Then the linear
  code $C_1 \tensor C_2 \subseteq \F_q^{n^2}$ has distance $d(C_1 \tensor
  C_2) = d(C_1) d(C_2)$.
\end{fact}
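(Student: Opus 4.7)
The plan is to prove equality by establishing matching upper and lower bounds on $d(C_1 \tensor C_2)$. Both directions are elementary, using only the definition of the tensor product code together with linearity of $C_1$ and $C_2$.

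For the upper bound $d(C_1 \tensor C_2) \le d(C_1) d(C_2)$, I would exhibit an explicit codeword of the claimed weight. Pick $c_1 \in C_1$ of weight $d(C_1)$ and $c_2 \in C_2$ of weight $d(C_2)$, and consider the rank-one outer product $M = c_1 c_2^T \in \F_q^{n \times n}$. The $j$-th column of $M$ is $(c_2)_j \cdot c_1$, a scalar multiple of $c_1$ and therefore in $C_1$; similarly every row is a scalar multiple of $c_2$ and lies in $C_2$. Thus $M \in C_1 \tensor C_2$, and its weight is precisely the product of the weights, namely $d(C_1) d(C_2)$.

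For the lower bound, let $M$ be an arbitrary non-zero codeword of $C_1 \tensor C_2$ and let $k$ denote its number of non-zero columns. Each non-zero column is a non-zero element of $C_1$ and therefore has weight at least $d(C_1)$, so the total Hamming weight of $M$ is at least $k \cdot d(C_1)$. It remains to show $k \ge d(C_2)$. Since $M \ne 0$, fix any column index $j_0$ with a non-zero entry, say $M_{i_0, j_0} \ne 0$. The $i_0$-th row of $M$ is a non-zero codeword of $C_2$, hence has at least $d(C_2)$ non-zero coordinates. Every such coordinate $j$ certifies that the $j$-th column of $M$ is non-zero, so $k \ge d(C_2)$, giving weight at least $d(C_1) d(C_2)$.

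Since there is no real obstacle here — both bounds follow immediately from linearity and the definitions — the only thing to be careful about is the logical structure: the lower bound uses a \emph{single} non-zero row to count non-zero columns, not an averaging argument. Combining the two inequalities yields $d(C_1 \tensor C_2) = d(C_1) d(C_2)$, as desired.
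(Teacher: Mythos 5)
Your proof is correct, and it is the standard argument for this well-known fact; the paper itself simply states Fact~\ref{fact:tensordistance} as well known and gives no proof, so there is nothing to compare against. Both directions are handled cleanly --- the rank-one outer product for the upper bound, and the single non-zero row to certify at least $d(C_2)$ non-zero columns (each contributing at least $d(C_1)$ to the weight) for the lower bound.
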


We shall need the following Lemma which shows that for many codewords
of $C \tensor C$ one can obtain a stronger bound on the distance than
the bound $d(C)^2$ given by Fact~\ref{fact:tensordistance}.

\begin{lemma}
  \label{lemma:zerodiag}
  Let $C \subseteq \F_q^n$ be a linear code of distance $d = d(C)$, and let
  $Y \in C \tensor C$ be a non-zero
  codeword with the additional properties that
  \begin{enumerate}
  \item The diagonal of $Y$ is zero.
  \item $Y$ is symmetric.
  \end{enumerate}
  Then $Y$ has at least  $d^2(1+1/q)$ non-zero entries.
\end{lemma}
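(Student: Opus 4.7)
The plan is to reduce the inequality to the claim $|S|\geq d(1+1/q)$, where $S\subseteq[n]$ is the set of indices of non-zero rows of $Y$; the bound then follows from $|Y|_0\geq d|S|$, since every non-zero row is a codeword of $C$ of weight at least $d$. By symmetry, $S$ is also the set of non-zero columns. Writing $c_i$ for the $i$-th row of $Y$, the zero-diagonal condition $Y_{ii}=0$ together with the fact that columns outside $S$ vanish implies that for each $i\in S$ the codeword $c_i$ is non-zero and supported inside $S\setminus\{i\}$.

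The main step is to argue that the family $\{c_i\}_{i\in S}$ spans a subspace $V\subseteq C$ of dimension at least $2$. If not, all the $c_i$ would be scalar multiples of a single codeword $v$: writing $c_i=\alpha_i v$ with $\alpha_i\neq 0$, the zero diagonal gives $\alpha_i v(i)=0$ and hence $v(i)=0$ for every $i\in S$. But $\mathrm{supp}(v)$ equals the support of any $c_i$ and so lies inside $S$, giving $v=0$, a contradiction.

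Given linearly independent rows $c_i$ and $c_j$, every combination $c_i+\alpha c_j$ with $\alpha\in\F_q$ is a non-zero codeword and hence has Hamming weight at least $d$. A position-by-position count (splitting on whether $c_j(k)=0$) yields the identity
\[
\sum_{\alpha\in\F_q}\mathrm{wt}(c_i+\alpha c_j)\;=\;q\,|A\cup B|-|B|,
\]
where $A=\mathrm{supp}(c_i)$ and $B=\mathrm{supp}(c_j)$. Combined with the lower bound $qd$ on the left-hand side, this gives $|A\cup B|\geq d+|B|/q\geq d(1+1/q)$. Since $A\cup B\subseteq S$, we conclude $|S|\geq d(1+1/q)$, completing the proof.

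The step I expect to be the most delicate is the dimension-$\geq 2$ reduction, since it is the unique place where both the symmetry hypothesis and the zero-diagonal hypothesis are really used in a nontrivial way; the averaging identity over $\F_q$ and the rest of the argument are routine one-line computations.
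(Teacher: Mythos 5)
Your proof is correct and follows essentially the same route as the paper: both arguments reduce to exhibiting two linearly independent rows of $Y$ and then applying a lower bound of $d(1+1/q)$ on the size of the union of supports of two independent codewords. The paper gets the independent pair more directly (pick $i\neq j$ with $Y_{ij}\neq 0$; zero diagonal then immediately forbids proportionality), and proves the union-of-supports bound by choosing a single $\lambda\neq 0$ that makes $x-\lambda y$ light and combining the resulting inequality with the naive $m+2m'\geq 2d$; you instead argue dimension $\geq 2$ globally by contradiction and derive the union bound by averaging $\mathrm{wt}(c_i+\alpha c_j)$ over all $\alpha\in\F_q$ — both are standard and equivalent variants of the same lemma.
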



\begin{proof}
  Suppose $Y_{ij} = Y_{ji} \ne 0$.  Since we have $Y_{ii} = 0$ it must hold
  that $i \ne j$ and that rows $i$ and $j$ are linearly independent
  codewords of $C$.  By Fact~\ref{fact:distboost} below it follows
  that the number of columns $k$ such that at least one of $Y_{ik}$
  and $Y_{jk}$ is non-zero is at least $d(1+1/q)$.  Each of
  these columns then has  at least $d$ non-zero entries
  and hence $Y$ has at least  $d^2(1+1/q)$
  non-zero entries.
\end{proof}

\begin{fact}
  \label{fact:distboost}
  Let $C \subseteq \F_q^n$ be a linear code of distance $d = d(C)$.  Then for
  any two linearly independent codewords $x, y \in \F_q^n$, the number
  of coordinates $i \in [n]$ for which either $x_i \ne 0$ or $y_i \ne
  0$ is at least $d(1+1/q)$.
\end{fact}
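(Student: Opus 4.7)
The plan is to exhibit a family of $q+1$ non-zero codewords whose total Hamming weight is at most $q$ times the size of the support union $S := \{i : x_i \neq 0 \text{ or } y_i \neq 0\}$, and then apply the distance bound to each.

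Specifically, I would consider the collection
\[
  \mathcal{F} \;=\; \{x\} \cup \{\, y + \alpha x \;:\; \alpha \in \F_q \,\},
\]
which contains $q+1$ codewords in $C$. Linear independence of $x$ and $y$ guarantees that none of these codewords is zero: if $y + \alpha x = 0$ for some $\alpha$, then $y$ would be a scalar multiple of $x$. Hence each codeword in $\mathcal{F}$ has Hamming weight at least $d$, and summing over $\mathcal{F}$ gives
\[
  \sum_{z \in \mathcal{F}} |\mathrm{supp}(z)| \;\geq\; (q+1)\,d.
\]

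Next, I would evaluate the same sum by swapping the order of summation and classifying coordinates $i \in [n]$. If $x_i = y_i = 0$, then $i$ contributes $0$. If $x_i = 0$ but $y_i \neq 0$, then the $i$-th coordinate is non-zero in all $q$ shifts $y + \alpha x$ but zero in $x$, contributing $q$. If $x_i \neq 0$, then the $i$-th coordinate of $y + \alpha x$ vanishes for exactly one value of $\alpha$ (namely $\alpha = -y_i/x_i$), while $x_i$ itself is non-zero; this again contributes $1 + (q-1) = q$. In every case, a coordinate in $S$ contributes exactly $q$ and coordinates outside $S$ contribute $0$, so the sum equals $q\,|S|$.

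Combining the two estimates yields $q\,|S| \geq (q+1)d$, i.e.\ $|S| \geq d(1 + 1/q)$, which is the desired bound. There is no real obstacle here; the only thing to be careful about is verifying that the shift $\alpha \mapsto -y_i/x_i$ kills the $i$-th coordinate in exactly one member of the family when $x_i \neq 0$, which is precisely where the extra $1/q$ factor comes from and where linear independence gets used (to rule out $y + \alpha x = 0$ as a whole vector).
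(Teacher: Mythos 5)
Your proof is correct, and it takes a genuinely different (and arguably cleaner) route than the paper. The paper argues by writing $m$ for the number of coordinates where exactly one of $x_i, y_i$ is nonzero and $m'$ for the number where both are nonzero, obtaining $m + 2m' \ge 2d$ from the weights of $x$ and $y$ alone, and then picking a single well-chosen scalar $\lambda \ne 0$ (by pigeonhole on the ratios $x_i/y_i$ over the $m'$ overlapping coordinates) so that $x - \lambda y$ has weight at most $m + m' - m'/(q-1)$; combining the two inequalities with weights $1/q$ and $(q-1)/q$ gives the bound. You instead take one representative from each of the $q+1$ lines in $\mathrm{span}\{x,y\}$ (namely $x$ and $y + \alpha x$ for $\alpha \in \F_q$), observe that each support coordinate contributes exactly $q$ to the total weight, and compare $q\,|S| \ge (q+1)d$. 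Your double-counting avoids both the pigeonhole step and the case split on $m$ vs.\ $m'$, and it makes transparent exactly where the $(q+1)/q$ factor comes from (the number of lines divided by the multiplicity of each support coordinate). Both proofs rest on the same core fact — every nonzero combination of $x$ and $y$ has weight at least $d$ — but yours exploits it by averaging over the whole pencil rather than by a clever single choice.
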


\begin{proof}  Let $m$ be the number of coordinates such that
  $x_i \ne 0$ or $y_i \ne 0$ but not both, and let $m'$ be the
  number of coordinates such that both
  $x_i \ne 0$ and $y_i \ne 0$. Clearly,
   $$m + 2m' \geq 2d. $$ \
  We
  can choose $\lambda \not= 0$ appropriately so that the
vector $x - \lambda y$ has at most $m + m' - m'/(q-1)$ non-zero
entries. This implies
 $$m + m' - m'/(q-1) \geq d. $$
 Multiplying the first inequality by $1/q$, the second by
$(q-1)/q$, and adding up gives $m + m' \geq d(1+1/q)$ as desired.
\end{proof}

\subsection{Hardness of Constraint Satisfaction}

The starting point in our reduction is a constraint satisfaction
problem that we refer to as the $\MaxNAND$ problem, defined as
follows.

\begin{definition}
  An instance $\Psi$ of the $\MaxNAND{}$ problem consists of a
  set of quadratic equations over $\F_2$, each of the form $x_k =
  \text{NAND}(x_i, x_j) = 1 + x_i \cdot x_j$ for some variables $x_i, x_j,
  x_k$.  The objective is to find an assignment to the variables such
  that as many equations as possible are satisfied.  We denote by
  $\Opt(\Psi) \in [0,1]$ the maximum fraction of satisfied equations
  over all possible assignments to the variables.
\end{definition}

The following is an easy consequence of the PCP Theorem \cite{FGLSS, AS, ALMSS}
and
the fact that NAND gates form a basis for the space of boolean
functions.

\begin{theorem}
  There is a universal constant $\delta > 0$ such that given a
  $\MaxNAND{}$ instance $\Psi$ it is NP-hard to determine whether
  $\Opt(\Psi) = 1$ or $\Opt(\Psi) \le 1-\delta$.
\end{theorem}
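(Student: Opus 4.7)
The plan is to reduce from the gap version of \textsc{Max-3SAT} guaranteed by the PCP theorem \cite{FGLSS, AS, ALMSS}: for some absolute constant $\delta_0 > 0$ it is \textbf{NP}-hard to distinguish, given a 3SAT instance $\phi$ with $m$ clauses, the case $\Opt(\phi) = 1$ from the case $\Opt(\phi) \le 1 - \delta_0$. For each clause $C_j = l_1 \vee l_2 \vee l_3$ of $\phi$ I would attach a constant-size gadget of $\MaxNAND$ equations over fresh auxiliary variables, such that: (i) whenever the underlying assignment to the 3SAT variables satisfies $C_j$, the auxiliary variables can be set so that every gadget equation holds; and (ii) whenever the underlying assignment fails to satisfy $C_j$, no extension to the auxiliary variables satisfies every gadget equation. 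Concatenating these gadgets yields a $\MaxNAND{}$ instance $\Psi$ of total size $g \cdot m$ for some constant $g$.

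The gadget has two parts. The first is a NAND-circuit that evaluates $v_j := l_1 \vee l_2 \vee l_3$ into a fresh auxiliary variable $v_j$; since $\text{NAND}$ is a universal basis, this takes $O(1)$ equations (for instance, negate each $l_i$ via $\text{NAND}(l_i, l_i)$ and then combine using $\vee = \text{NAND} \circ (\neg, \neg)$). Whenever the circuit equations all hold, $v_j$ equals the truth value of $C_j$ on the input assignment. The second part is the single ``anchor'' equation $v_j = \text{NAND}(v_j, w_j)$, where $w_j$ is another fresh auxiliary variable. Unpacking the definition of $\text{NAND}$ gives $v_j = 1 + v_j w_j$, equivalently $v_j(1 + w_j) = 1$ over $\F_2$, which is satisfied if and only if $v_j = 1$ and $w_j = 0$. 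The anchor equation therefore forces $v_j$ to be $1$, which combined with correct evaluation by the circuit is exactly the condition that $C_j$ is satisfied.

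In the completeness case, any satisfying assignment of $\phi$ extends to the auxiliary variables by evaluating each circuit and setting every $w_j = 0$, so $\Opt(\Psi) = 1$. In the soundness case, fix any assignment to the variables of $\Psi$ and consider any clause $C_j$ not satisfied by the induced assignment to the 3SAT variables. If all circuit equations of its gadget are satisfied, then $v_j = 0$ and the anchor equation fails because $\text{NAND}(0, w_j) = 1 \ne 0$; otherwise, some circuit equation is already violated. Either way, at least one equation per unsatisfied clause is violated, so if at least $\delta_0 m$ clauses of $\phi$ are unsatisfied then at least $\delta_0 m$ of the $g m$ equations of $\Psi$ fail, giving $\Opt(\Psi) \le 1 - \delta_0/g$. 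The theorem follows with $\delta := \delta_0/g$. The only nontrivial step is the anchor trick: in the absence of any way to introduce a logical ``true'' constant via a NAND equation, one exploits the observation that $v = \text{NAND}(v, w)$ is satisfied \emph{only} when $v = 1$, letting a single equation force the desired output.
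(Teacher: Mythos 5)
Your proof is correct and fills in exactly the argument the paper leaves implicit: the paper only remarks that the theorem is "an easy consequence of the PCP Theorem and the fact that NAND gates form a basis," and your gadget reduction from gap Max-3SAT---with NAND circuits computing each clause and the anchor equation $v_j = \text{NAND}(v_j,w_j)$ forcing $v_j=1$ without access to a constant---is a clean instantiation of precisely that idea.
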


\section{The Binary Case} \label{sec:2} 

In this section we give a  simple reduction from $\MaxNAND$
showing that it is NP-hard to approximate $\MinDist(2)$ to within
some constant factor.

\renewcommand{\S}{\mathcal{S}}

\subsection{Reduction to Nearest Codeword} \label{sec:nc} 

It is instructive to start with a reduction for the Nearest Codeword
Problem, $\NCP(2)$, for which it is significantly easier to prove
hardness.  There are even simpler reductions known than the one we
give here, but as we shall see in the next section this reduction can
be modified to give hardness for the $\MinDist(2)$ problem.

\medskip
Given a $\MaxNAND$ instance $\Psi$ with $n$ variables and $m$ constraints,
we shall construct an affine
subspace $\S$ of $\F_2^{4m}$ such that:
\begin{enumerate}
\item[(i)] If $\Psi$ is satisfiable then $\S$ has a vector of Hamming weight at most
 $m$.
\item[(ii)] If $\Opt(\Psi) \leq 1-2\delta$ then $\S$ has no vector of Hamming weight less than
$(1+2\delta)m$.
\end{enumerate}
This proves, according to Definition \ref{def:nc}, that $\NCP(2)$ is
NP-hard to approximate within a factor $1+2\delta$.

Every constraint $x_k = 1+x_ix_j$ in $\Psi$ gives rise to four new
variables, as follows.  We think of the four variables as a function
$S_{ijk}: \F_2^2 \rightarrow \F_2$.  The intent is that this function
should be the indicator function of the values of $x_i$ and $x_j$, in
other words, that
$$
S_{ijk}(a,b) = \left\{\begin{array}{ll}
1 & \text{if $x_i = a$ and $x_j = b$} \\
0 & \text{otherwise}
\end{array}\right..
$$
With this interpretation in mind, each function $S_{ijk}$ has to
satisfy the following linear constraints over $\F_2$:
\begin{eqnarray}
  S_{ijk}(0,0) + S_{ijk}(0,1) + S_{ijk}(1,0) + S_{ijk}(1,1) &=& 1
  \label{eq:sijk-1} \\
  S_{ijk}(1,0) + S_{ijk}(1,1) &=& x_i \label{eq:sijk-2} \\
  S_{ijk}(0,1) + S_{ijk}(1,1) &=& x_j \label{eq:sijk-3} \\
  S_{ijk}(0,0) + S_{ijk}(0,1) + S_{ijk}(1,0) &=& x_k. \label{eq:sijk-4}
\end{eqnarray}
Thus, we have a set of $n+4m$ variables $z_1, \ldots, z_{n+4m}$ (recall that
$n$ and $m$ are the number of variables and constraints of $\Psi$,
respectively) and $4m$ linear constraints of the form $\sum l_{ij} z_j
= b_i$ where $l_i \in \F_2^{n+4m}$ and $b_i \in \F_2$.

Let $\S \subseteq \F_2^{4m}$ be the affine subspace of $\F_2^{4m}$
defined by the set of solutions to the system of equations, projected
to the $4m$ coordinates corresponding to the $S_{ijk}$ variables.  Note
that these coordinates uniquely determine the remaining $n$
coordinates (assuming without loss of generality that every variable
of $\Psi$ appears in some constraint), according to Equations
\eqref{eq:sijk-2}-\eqref{eq:sijk-4}.

Now, if $\Psi$ is satisfiable, then using the satisfying assignment
for $x$ and the intended values for the $S_{ijk}$'s we obtain an
element of $\S$ with $m$ non-zero entries. Note that for each
constraint involving variables $x_i, x_j, x_k$, exactly one of the
four variables $S_{ijk}(\cdot, \cdot)$ is non-zero.

On the other hand, note that if the function $S_{ijk}(\cdot, \cdot)$
has exactly one non-zero entry it must be that the induced values of
$(x_i, x_j, x_k)$ satisfy the constraint $x_k = 1+x_i \cdot x_j$
(which one can see either by trying all such $S_{ijk}$ or noting that
each of the four different satisfying assignments to $(x_i, x_j, x_k)$
gives a unique such $S_{ijk}$).  Since every $S_{ijk}$ is constrained
to have an odd number of non-zero entries by Equation \eqref{eq:sijk-1},
it means that whenever
$S_{ijk}$ induces values of $(x_i, x_j, x_k)$ that do not satisfy $x_k
= 1+x_i \cdot x_j$, it must hold that $S_{ijk}$ has three non-zero
entries.  Therefore, we see that if $\Opt(\Psi) \le 1-\delta$, it must
hold that every element of $\S$ has at least $(1+2\delta)m$ non-zero
entries.

To summarize, we obtain that it is NP-hard to approximate the minimum
weight element of an affine subspace (or equivalently, the Nearest
Codeword Problem) to within a constant factor $1 + 2\delta$.

\subsection{Reduction to Minimum Distance} \label{sec:mindist2} 

To get the hardness result for the $\MinDist$ problem, we would like
to alter the reduction in the previous section so that it produces a
linear subspace rather than an affine one.  The only non-homogenous
part of the subspace produced are the equations \eqref{eq:sijk-1}
constraining each $S_{ijk}$ to have an odd number of entries.  To
produce a linear subspace, we are going to replace the constant $1$
with a variable $x_0$, which is intended to take the value $1$.  In other words, we replace Equation~\eqref{eq:sijk-1} with the following equation:
\begin{equation}
\tag{\ref*{eq:sijk-1}'}
\label{eq:sijk-1x}
S_{ijk}(0,0) + S_{ijk}(0,1) + S_{ijk}(1,0) + S_{ijk}(1,1) = x_0
\end{equation}
However, in order to make this work we need to
ensure that every assignment where $x_0$ is set to $0$ has large
weight, and this requires adding some more components to the reduction.

A first observation is that the system of constraints relating $S_{ijk}$ to
$(x_0, x_i, x_j, x_k)$ is invertible.  Namely, we have Equations
\eqref{eq:sijk-1x}-\eqref{eq:sijk-4}, and inversely, that
\begin{align*}
  S_{ijk}(0,0) &= x_i + x_j + x_k &
  S_{ijk}(0,1) &= x_0 + x_j + x_k \\
  S_{ijk}(1,0) &= x_0 + x_i + x_k &
  S_{ijk}(1,1) &= x_0 + x_k.
\end{align*}
Second, if $x_0 = 0$ but at least one of $(x_i, x_j, x_k)$ is
non-zero, it must hold that $S_{ijk}$ has at least two non-zero
entries.  Thus, if it happens that for a large fraction (more than
$1/2$) of constraints at least one of $(x_i, x_j, x_k)$ is non-zero,
it must be the case that the total weight of the $S_{ijk}$'s is larger
than $m$.  But of course, we have no way to guarantee such a condition
on $(x_i, x_j, x_k)$.

However, we can construct what morally amounts to a separate dummy
instance of $\MaxNAND$ that has this property, and then let it use the
same $x_0$ variable as $\Psi$.
Towards this end, let $C \subseteq \F_2^N$ be a linear code of relative
 distance
$1/2-\epsilon$.  Here $\epsilon > 0$ will be chosen sufficiently small
and for reasons that will
become clear momentarily, the dimension of the code will be exactly
 $n$ so that
one can take $N = O(n)$.

\smallskip
Now we introduce $N + N^2$ new variables which we think of as a vector
$y \in \F_2^N$ and matrix $Y \in \F_2^{N \times N}$.  The vector $y$
should be an element of $C$ and the matrix $Y$ should be an element of
$C \tensor C$.  The intention is that $Y = y \cdot y^{\top}$, or in other
words, that for every $i, j \in [N]$ we have $Y_{ij} = y_i \cdot y_j$.

Analogously to the $S_{ijk}$ functions intended to check the NAND
constraints of $\Psi$, we now introduce for every $i,j \in [N]$ a
function $Z_{ij}: \F_2^2 \rightarrow \F_2$ that is intended to check
the constraint $Y_{ij} = y_i \cdot y_j$, and that is supposed to be
the indicator of the assignment to the variables $(y_i, y_j)$.  We
then impose the analogues of the constraints
\eqref{eq:sijk-1x}-\eqref{eq:sijk-4},  viz.
\begin{eqnarray}
  Z_{ij}(0,0) + Z_{ij}(0,1) + Z_{ij}(1,0) + Z_{ij}(1,1) &=& x_0
  \label{eq:zij-1} \\
  Z_{ij}(1,0) + Z_{ij}(1,1) &=& y_i \label{eq:zij-2} \\
  Z_{ij}(0,1) + Z_{ij}(1,1) &=& y_j \label{eq:zij-3} \\
  Z_{ij}(1,1) &=& Y_{ij}.\label{eq:zij-4}
\end{eqnarray}
Figure~\ref{fig:f2reduction} gives an overview of the different
components of the reduction and their relations (including some
relations that we have not yet described, though we shall do so
momentarily).

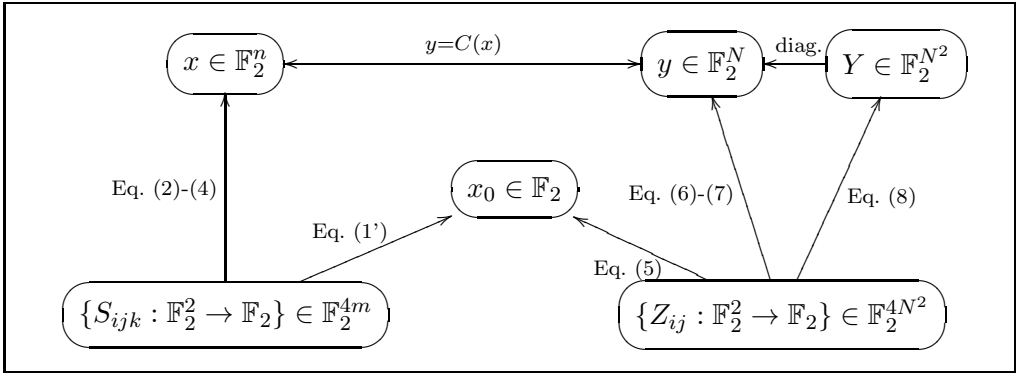
\begin{figure}[ht]
  \framebox{
    \begin{minipage}{0.85\textwidth}
      \entrymodifiers={++[F-:<10pt>]}
      \vspace{0.2cm}
      \centerline{
        \xymatrix{
          {x \in \F_2^n} \ar@{<->}[rr]^{y = C(x)} & 
          *{} & 
          {y \in \F_2^N} &
          {Y \in \F_2^{N^2}} \ar[l]_{\text{diag.}} \\
          *{} & 
          {x_0 \in \F_2} & 
          *{} \\
          \{S_{ijk}: \F_2^2 \rightarrow \F_2\} \in \F_2^{4m} \ar[uu]^{\text{Eq.~\eqref{eq:sijk-2}-\eqref{eq:sijk-4}}} \ar[ur]^{\text{Eq.~\eqref{eq:sijk-1x}}} &
          *{} & 
          *+<26pt>{} & *{}
          \save [l].[]!C="Z"*++[F-:<10pt>]{\{Z_{ij}: \F_2^2 \rightarrow \F_2\} \in \F_2^{4N^2}}\frm{}\restore
          \ar "Z";"2,2"^{\text{Eq.~\eqref{eq:zij-1}}}
          \ar "Z";"1,3"^{\text{Eq.~\eqref{eq:zij-2}-\eqref{eq:zij-3}}}
          \ar "Z";"1,4"_{\text{Eq.~\eqref{eq:zij-4}}}
        }
      }
      \vspace{0.2cm}
    \end{minipage}
  }
  \centering
  \caption{The different components of the reduction to $\MinDist(2)$.
    An arrow from one component to another indicates that the second
    component is a linear function of the first, with the label
    indicating the nature of this linear function.}
  \label{fig:f2reduction}
\end{figure}

The final subspace $\S$ will consist of the projection to the $4m$
different $S_{ijk}$ variables and the $4N^2$ different $Z_{ij}$
variables, but with each of the $S_{ijk}$ variables repeated some $r
\approx N^2/m$ number of times in order to make these two sets of
variables of comparable size.

Note that by Equations~\eqref{eq:sijk-1}-\eqref{eq:sijk-4} and
\eqref{eq:zij-1}-\eqref{eq:zij-4} these variables uniquely determine
$x_0$, $x$, $y$ and $Y$.  Furthermore, because of the invertibility of
these constraints, we have that if some $S_{ijk}$ or $Z_{ij}$ is
non-zero it must hold that one of $x_0$, $x$, $y$ and $Y$ are
non-zero.

As in the previous section, when $x_0$ is non-zero, each $S_{ijk}$ and
$Z_{ij}$ must have at least one non-zero entry and all the $\delta$
fraction of the $S_{ijk}$'s corresponding to unsatisfied NAND
constraints of $\Psi$ must have at least three non-zero entries, giving a total
weight of
$$
(1+2\delta)rm + N^2.
$$
Now consider the case that $x_0$ is zero. Let us first look at the
subcase that $y$ is non-zero. Since
$y \in C$ is a non-zero codeword, at least $(1/2-\epsilon)N$ of its
coordinates are non-zero. Thus, for
at least $(3/4 - 2\epsilon)N^2$  pairs $(y_i, y_j) \not= (0,0)$. For each
such pair, the corresponding $Z_{ij}$ function
is non-zero, and as argued earlier,
has at least two non-zero
entries, which means that the total weight of the $Z_{ij}$'s is at least
$$2 \cdot (3/4-2\epsilon) \cdot N^2 = \left(\frac{3}{2}- 4\epsilon\right) \cdot N^2.$$

The next subcase is that $x_0$ and $y$ are zero but either $x$ or $Y$
is non-zero.  We first enforce that $x=0$.   Recall that  $C$ has
dimension exactly $n$, and hence there is a one-to-one linear map
$C : \F_2^n \mapsto \F_2^N$. We may  therefore
 add the additional constraints that $y = C(x)$ is the
encoding of $x$.  Then,  $x$ is non-zero if and only if $y$ is.

The only possibility that remains is that $x_0$, $x$ and $y$ are all
zero, but that the matrix $Y$ is non-zero.  In this case, it is easily
verified from Equations \eqref{eq:zij-1}-\eqref{eq:zij-4}
that for each $i,j \in [N]$ such that $Y_{ij}$ is non-zero,
it must be that $Z_{ij}$ has four non-zero entries.  However, the
distance of the code $C \tensor C$ to which $Y$ belongs is only
$(1/2-\epsilon)^2 < 1/4$, so it seems as though we just came short of
obtaining a large distance.  However, there are two additional
constraints that we can impose on $Y$: first, if $Y = y \cdot
y^{\top}$ we have that the diagonal entries $Y_{ii}$ should equal
$y_i^2 = y_i$, so we can add the requirement that the diagonal of $Y$
equals $y$.  Second, it should be the case that $Y_{ij} =
Y_{ji}$, so we also add the constraint that $Y$ is symmetric.  With
these constraints, Lemma~\ref{lemma:zerodiag} now implies that $Y$ in
fact has $(1/2-\epsilon)^2 \cdot \frac{3}{2} > (1/4-2\epsilon) \frac{3}{2}$ fraction
non-zero entries.  As mentioned above, each corresponding $Z_{ij}$
function has four non-zero entries giving a total of
$$
4 \cdot (3/8 - 3\epsilon) \cdot N^2 = \left(\frac{3}{2}-12\epsilon\right) \cdot N^2
$$
non-zero entries.

In summary, this gives that when $\Opt(\Psi) \le 1-\delta$, every non-zero
vector in $\S$ must have weight
at least $$\min\left((1+2\delta)rm + N^2,  \,
\left(\frac{3}{2}-12\epsilon\right) \cdot N^2 \right),$$ whereas if $\Psi$ is satisfiable the minimum
distance is $rm + N^2$ (since exactly one entry is non-zero for each
$S_{ijk}$ and $Z_{ij}$).   Choosing $\epsilon > 0$ sufficiently small and
$$r \approx \frac{N^2}{2(1+2\delta)m}$$ we obtain that it is NP-hard
to approximate $\MinDist(2)$ to within some factor $\delta' > 1$.  

We have not yet proved that $\C(\Psi)$ has good rate and distance.  In
Section~\ref{sect:goodcode}, we give a proof of this for our reduction
for the general case.  That proof also works for the binary case.

\section{Interlude: Polynomials and Pseudorandomness over $\F_q$}

In this section we describe some background material that we need for
the generalization of the reduction for $\F_2$ to any finite field.

We recall two basic properties about polynomials over finite fields.
First, we have the well-known fact that every function on $\F_q^n$
can be uniquely represented by a polynomial of maximum degree $q-1$.

\begin{fact}
  \label{fact:polybasis}
  The set of polynomials 
  $$\{\,X_1^{i_1}X_2^{i_2} \cdots X_n^{i_n}\,:\, \text{$0 \le i_j \le
    q-1$ for all $1 \le j \le n$}\,\}$$ form a basis for the set of
  functions from $\F_q^n$ to $\F_q$.
\end{fact}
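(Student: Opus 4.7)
The plan is a standard dimension-count argument. First I would observe that the space of functions $\F_q^n \to \F_q$ is an $\F_q$-vector space of dimension exactly $q^n$, since a function is determined by its $q^n$ values, one for each point of $\F_q^n$, and these values can be chosen independently. On the other hand, the proposed spanning set has exactly $q^n$ elements, one for each tuple $(i_1, \ldots, i_n) \in \{0, 1, \ldots, q-1\}^n$. So it suffices to prove either spanning or linear independence; I will go for linear independence.

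Linear independence (as functions, not as formal polynomials) is proved by induction on $n$. Suppose some $\F_q$-linear combination
\[
P(X_1, \ldots, X_n) = \sum_{0 \le i_1, \ldots, i_n \le q-1} c_{i_1, \ldots, i_n} X_1^{i_1} \cdots X_n^{i_n}
\]
vanishes at every point of $\F_q^n$. Group the sum by the exponent of $X_n$ to write $P = \sum_{k=0}^{q-1} P_k(X_1, \ldots, X_{n-1}) \, X_n^k$. For any fixed $(a_1, \ldots, a_{n-1}) \in \F_q^{n-1}$, the univariate polynomial $\sum_k P_k(a_1, \ldots, a_{n-1}) \, X_n^k$ has degree at most $q-1$ and vanishes at all $q$ elements of $\F_q$; since $\F_q$ is a field, this forces every coefficient $P_k(a_1, \ldots, a_{n-1})$ to be zero. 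As this holds for every tuple $(a_1, \ldots, a_{n-1})$, each $P_k$ vanishes as a function on $\F_q^{n-1}$, and the inductive hypothesis gives $c_{i_1, \ldots, i_{n-1}, k} = 0$ for all indices. The base case $n = 1$ is the familiar fact that a nonzero univariate polynomial of degree less than $q$ has fewer than $q$ roots.

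There is essentially no obstacle here; the only thing to be slightly careful about is distinguishing \emph{polynomial} identities from \emph{functional} identities (this is exactly why one restricts to exponents $\le q-1$, since for example $X^q$ and $X$ agree as functions on $\F_q$ but not as formal polynomials). The induction above is carried out purely at the level of functions, so this distinction causes no trouble. If one prefers a direct proof of spanning instead, the alternative is to exhibit for each $a = (a_1, \ldots, a_n) \in \F_q^n$ the explicit indicator
\[
\mathbf{1}_a(X) = \prod_{j=1}^n \bigl(1 - (X_j - a_j)^{q-1}\bigr),
\]
which by Fermat's little theorem equals $1$ at $a$ and $0$ elsewhere, and which expands into a polynomial with each variable of degree at most $q-1$; since the indicators span all functions, the monomials do too.
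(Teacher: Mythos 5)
The paper states Fact~\ref{fact:polybasis} without proof, treating it as a well-known standard result, so there is no argument in the paper to compare against. Your proof is correct: the dimension count reduces the problem to linear independence, the induction on $n$ with the univariate root-counting base case is the standard argument, and you correctly separate functional from formal polynomial identities. Your alternative spanning argument via the indicator $\mathbf{1}_a(X) = \prod_{j=1}^n \bigl(1 - (X_j - a_j)^{q-1}\bigr)$ is also correct and is arguably the cleaner route for this fact, since it directly exhibits the delta functions in the span. Either version would serve as a fully adequate proof of the fact.
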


Second, we have the Schwarz-Zippel Lemma.

\begin{lemma}[Schwarz-Zippel]
  \label{lemma:schwarzippel}
  Let $p \in \F_q [X_1, \ldots, X_n ]$ be a non-zero polynomial of
  total degree at most $d$.  Then $p$ has at most a fraction $d
  q^{n-1}$ zeros.
\end{lemma}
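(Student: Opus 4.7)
The plan is to prove the Schwarz-Zippel Lemma by induction on the number of variables $n$, which is the textbook approach. The base case $n=1$ is just the familiar fact that a non-zero univariate polynomial of degree at most $d$ over a field has at most $d$ roots, and since the domain $\F_q$ has size $q$, the number of zeros is at most $d = d \cdot q^{0}$, matching the bound. So the real work is in the inductive step.

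For the inductive step, I would regard $p$ as a polynomial in $X_n$ whose coefficients are polynomials in $X_1, \ldots, X_{n-1}$: write $p(X_1, \ldots, X_n) = \sum_{i=0}^{k} X_n^{i}\, q_i(X_1, \ldots, X_{n-1})$, where $k$ is the largest index for which $q_i$ is non-zero. Note that $q_k$ then has total degree at most $d - k$, and since $p \neq 0$ we have $q_k \neq 0$. Now split the analysis over assignments $(a_1, \ldots, a_{n-1}) \in \F_q^{n-1}$ according to whether $q_k(a_1, \ldots, a_{n-1})$ vanishes.

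In the case $q_k(a_1, \ldots, a_{n-1}) \neq 0$, the residual polynomial $p(a_1, \ldots, a_{n-1}, X_n)$ is a non-zero univariate polynomial of degree exactly $k$, and so by the base case it has at most $k$ zeros in $X_n$. This contributes at most $k \cdot q^{n-1}$ zeros of $p$, upper bounding by the trivial count $q^{n-1}$ on the number of such assignments. In the case $q_k(a_1, \ldots, a_{n-1}) = 0$, we apply the induction hypothesis to $q_k$, which has total degree at most $d-k$ in $n-1$ variables, so the number of such assignments is at most $(d-k) q^{n-2}$; each of them can contribute at most $q$ zeros of $p$ (one for each value of $X_n$), giving at most $(d-k) q^{n-1}$ zeros. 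Summing the two contributions yields at most $k q^{n-1} + (d-k) q^{n-1} = d q^{n-1}$ zeros in total, completing the induction.

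There is no real obstacle here: the argument is entirely elementary, and the only thing to be mildly careful about is correctly tracking that $q_k$ has degree at most $d-k$ (so that the induction hypothesis applies with the right parameter) and handling the edge case $k = 0$ (in which $p = q_0$ depends only on $X_1, \ldots, X_{n-1}$ and the inductive hypothesis applied to $q_0$ already gives the bound $d q^{n-2} \le d q^{n-1}$ directly).
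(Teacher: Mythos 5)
The paper does not prove this lemma; it is stated as a classical background fact (it is the Schwartz--Zippel lemma specialized to $\F_q^n$), so there is no paper proof to compare against. Your proof is the standard and correct inductive argument, and it establishes the intended bound (the paper's phrasing ``a fraction $dq^{n-1}$'' is a minor typo for ``at most $dq^{n-1}$ zeros,'' i.e., a fraction $d/q$, which is exactly what you prove). One small slip: your parenthetical treatment of the edge case $k=0$ says the induction ``already gives the bound $dq^{n-2}\le dq^{n-1}$ directly,'' but the correct accounting is that each of the at most $dq^{n-2}$ zeros of $q_0$ in $\F_q^{n-1}$ extends to $q$ zeros of $p$ in $\F_q^n$, giving $dq^{n-2}\cdot q = dq^{n-1}$; in any event the special case is already subsumed by your two-case split (Case~1 contributes $0$ and Case~2 contributes $(d-0)q^{n-1}$ when $k=0$), so no separate treatment is needed.
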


\subsection{Linear Approximations to Nonlinear Codes}

In our hardness result for $\MinDist(q)$, we need explicit
constructions of certain codes which can be thought of as serving as
linear approximations to some nonlinear codes.  In particular, we need
a sequence of linear codes $C_1, \ldots, C_{q-1}$ over $\F_q^N$ with
the following two properties:
\begin{enumerate}
\item $d(C_e) \gtrsim (1-e/q) \cdot N$ for $1 \leq e \leq q-1$.
\item If $x \in C_1$ then $x^e \in C_e$ for $1 \leq e \leq q-1$.
      Here $x^e$ denotes a vector that is componentwise $e^{th}$ power of $x$.
\end{enumerate}
In other words, $C_e$ should contain the nonlinear code $\{x^e\}_{x
  \in C_1}$, while still having a reasonable amount of distance.  In
this sense we can think of $C_e$ as a linear approximation to a
nonlinear code.  

To obtain such a sequence of codes, we use pseudorandom generators for
low-degree polynomials.  Such pseudorandom generators were recently
constructed by Viola \cite{Viola} (building on
\cite{BogVio10,Lovett09}), who showed that the sum of $d$ PRGs for
linear functions fool degree $d$ polynomials.  Using his result, and
PRGs against linear functions of optimal seed length $\log_q n + O(1 +
\log_q 1/\epsilon)$ (see e.g., Appendix A of \cite{BogVio10}), one
obtains the following theorem.

\begin{theorem}
  \label{thm:viola_generator}
  For every prime power $q$, $d > 0$, $\epsilon > 0$ there is a
  constant $c := c(q, d, \epsilon)$ such that for every $n > 0$, there
  is a polynomial time constructible (multi)set $R \subseteq \F_q^n$
  of size $|R| \le c \cdot n^{d}$ such that, for any polynomial $f:
  \F_q^n \rightarrow \F_q$ of total degree at most $d$, it holds that
  \begin{equation}
    \label{eqn:viola_generator1}
    \sum_{a \in \F_q} \left| \Pr_{x \sim R}[f(x) = a] - \Pr_{x \sim \F_q^n}[f(x) = a]\right| \le \epsilon.
  \end{equation}
\end{theorem}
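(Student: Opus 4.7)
The plan is to combine an off-the-shelf small-bias generator with Viola's theorem as a black box and then to tune parameters. Specifically, for a parameter $\epsilon' > 0$ to be chosen later, I would start with the standard $\epsilon'$-biased generator over $\F_q^n$ (as reviewed in Appendix A of \cite{BogVio10}) to obtain a polynomial-time constructible multiset $R_0 \subseteq \F_q^n$ with seed length $\log_q n + O(1 + \log_q(1/\epsilon'))$, hence of size $|R_0| \le O(n \cdot (1/\epsilon')^{O(1)})$, which $\epsilon'$-fools all nontrivial $\F_q$-linear forms in the character-sum sense: for every nonzero $\ell : \F_q^n \to \F_q$ and every nontrivial additive character $\chi$ of $\F_q$, $\bigl| \E_{x \sim R_0}[\chi(\ell(x))] \bigr| \le \epsilon'$.

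I would then invoke the main theorem of \cite{Viola}: if $R_1, \dots, R_d$ are independent copies of such an $\epsilon'$-biased base, then the Minkowski sum
\[
R := \{\, x_1 + x_2 + \cdots + x_d \,:\, x_i \in R_i \,\} \subseteq \F_q^n
\]
is a pseudorandom generator that $\eta(\epsilon', d, q)$-fools (in the character-sum sense) every polynomial $f : \F_q^n \to \F_q$ of total degree at most $d$, for some function $\eta$ with $\eta(\epsilon', d, q) \to 0$ as $\epsilon' \to 0$ with $d, q$ fixed. This is the only deep input, and I would invoke it as a black box; its proof (following the derivative / Gowers-norm framework of \cite{BogVio10,Lovett09}) would be the main obstacle if one tried to reprove it from scratch.

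To finish, I would convert character-sum fooling into the $\ell_1$ statistical-distance form \eqref{eqn:viola_generator1} via the standard Fourier identity over $\F_q$: for any $\F_q$-valued random variable $Z$, $\Pr[Z = a] = \frac{1}{q} \sum_{\chi} \overline{\chi(a)}\, \E[\chi(Z)]$, which after subtracting the analogous identity for uniform $Z$ and taking absolute values gives
\[
\sum_{a \in \F_q} \bigl|\Pr_{x \sim R}[f(x) = a] - \Pr_{x \sim \F_q^n}[f(x) = a]\bigr| \;\le\; (q-1)\cdot \eta(\epsilon', d, q).
\]
Choosing $\epsilon'$ as a suitable function of $\epsilon, d, q$ makes the right-hand side at most $\epsilon$. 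The size of $R$ is then $\prod_i |R_i| = O\bigl((n/(\epsilon')^{O(1)})^d\bigr) = c(q, d, \epsilon) \cdot n^d$, with the constant absorbing all dependence on $\epsilon', d, q$, and $R$ is polynomial-time constructible because each $R_i$ is. Everything outside of Viola's lifting is parameter bookkeeping and elementary Fourier analysis.
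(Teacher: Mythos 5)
Your proposal is correct and takes essentially the same route as the paper, which does not spell out a proof but simply cites Viola's result that the sum of $d$ independent $\epsilon'$-biased generators over $\F_q$ fools degree-$d$ polynomials, together with $q$-ary small-bias generators of seed length $\log_q n + O(1 + \log_q(1/\epsilon'))$ from Appendix~A of \cite{BogVio10}. Your unpacking of the omitted steps --- the Fourier inversion converting character-sum bias into the $\ell_1$ bound with a factor $(q-1)$, and the count $|R| = \prod_i |R_i| \le c(q,d,\epsilon)\,n^d$ --- is exactly the intended derivation.
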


\begin{remark}
  The constant $c$ of Theorem~\ref{thm:viola_generator}
  can be taken to be $c(q, d, \epsilon) = \left(q/\epsilon\right)^{O(d
    2^d)}$.
\end{remark}

\begin{remark}
  In order for the hardness result of Theorem~\ref{thm:main} to apply
  for codes with constant rate, we need the set $R$ of
  Theorem~\ref{thm:viola_generator} to have size $O(n^d)$.  For this,
  the parameters of Viola's result \cite{Viola} are necessary, and the
  earlier result \cite{Lovett09} does not suffice.  If one does not care
  about this property, any $|R| = \text{poly}(n)$ suffices.
\end{remark}

A simple corollary of the property (\ref{eqn:viola_generator1}) and
the Schwarz-Zippel Lemma~\ref{lemma:schwarzippel} is the following.

\begin{corollary}
  \label{cor:viola_generator}
  If $d = q-1$ the (multi)set $R \subseteq \F_q^n$ constructed in
  Theorem~\ref{thm:viola_generator} has the property that for
  every non-zero polynomial $f: \F_q^n \rightarrow \F_q$ of total
  degree at most $e \leq q-1$,
  \begin{equation}
    \label{eqn:violagenerator2}
    \Pr_{x \sim R}[f(x) \not= 0] \ge 1 - e/q - \epsilon.
  \end{equation}
\end{corollary}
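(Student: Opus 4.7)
The plan is to combine Schwarz--Zippel (Lemma~\ref{lemma:schwarzippel}) with the indistinguishability guarantee of Theorem~\ref{thm:viola_generator}, applied with parameter $d = q-1$. Since $e \le q-1 = d$, any non-zero polynomial $f$ of degree at most $e$ has degree at most $d$, so Theorem~\ref{thm:viola_generator} applies to $f$.

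First, I would invoke Schwarz--Zippel on $f$: a non-zero polynomial of total degree at most $e$ over $\F_q^n$ vanishes on at most an $e/q$ fraction of points, so
\[
\Pr_{x \sim \F_q^n}\bigl[f(x) = 0\bigr] \;\le\; e/q,
\qquad\text{equivalently}\qquad
\Pr_{x \sim \F_q^n}\bigl[f(x) \ne 0\bigr] \;\ge\; 1 - e/q.
\]

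Next, I would extract the $a = 0$ term from the sum in (\ref{eqn:viola_generator1}). Since every summand is non-negative, the single term corresponding to $a = 0$ is bounded by the whole sum:
\[
\bigl|\,\Pr_{x \sim R}[f(x)=0] \;-\; \Pr_{x \sim \F_q^n}[f(x)=0]\,\bigr| \;\le\; \epsilon.
\]
Rearranging and combining with the Schwarz--Zippel bound yields
\[
\Pr_{x \sim R}[f(x) = 0] \;\le\; \Pr_{x \sim \F_q^n}[f(x) = 0] + \epsilon \;\le\; e/q + \epsilon,
\]
which gives the desired inequality $\Pr_{x \sim R}[f(x) \ne 0] \ge 1 - e/q - \epsilon$.

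There is no real obstacle here: the only subtlety is making sure that the hypothesis of Theorem~\ref{thm:viola_generator} is satisfied for every $e$ in the range $1 \le e \le q-1$, which is precisely why the corollary fixes $d = q-1$ (the largest degree in question). Everything else is a one-line deduction from two results stated earlier in the paper.
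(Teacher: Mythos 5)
Your proof is correct and matches the intended argument: the paper explicitly flags Schwarz--Zippel and property (\ref{eqn:viola_generator1}) as the two ingredients and leaves the (one-line) combination to the reader, which is exactly what you supply by isolating the $a=0$ term of the sum and applying the triangle-inequality-style bound. Nothing to add.
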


Now define, for $1 \leq e \leq q-1$, $C_e$ to be the set of all
vectors $(f(x))_{x \in R}$ where $f: \F_q^n \mapsto \F_q$ is a degree
$e$ polynomial with no constant term (i.e.,  $f(0)=0$). Clearly, $C_e$
is a linear subspace of $\F_q^{|R|}$.  As observed in
Corollary~\ref{cor:viola_generator}, the relative distance of $C_e$ is
essentially $1-e/q$ (as $\epsilon $ can be taken to be arbitrarily
small relative to $q$). Moreover, any $v \in C_1$ is the evaluation
vector of a degree one polynomial, and hence $v^e$ is the evaluation
vector of a degree $e$ polynomial, and therefore $v^e \in C_e$ as
desired.

\section{Reduction to $\MinDist(q)$ for $q \geq 3$} \label{sec:q} 

We now describe a general reduction from the $\MaxNAND$ problem to the
$\MinDist(q)$ problem for any prime power $q$.  The basic idea is the
same as in the $\F_2$ case but some additional work is needed both in
the reduction itself and its analysis.

Given a $\MaxNAND$ instance $\Psi$, we construct a linear code
$\C(\Psi)$ over $\F_q$ as follows.  For simplicity we here assume that
$q \ge 3$ as the binary case was already handled in the previous
section.  As before, let $n$ be the number of variables in the $\MaxNAND$
 instance and $m$ the number of constraints.

Fix some small enough parameter $\epsilon$ and let $R \subseteq \F_q^n$ be
the $\epsilon$-pseudorandom set for degree $q-1$ polynomials $\F_q^n
\rightarrow \F_q$ given by Theorem~\ref{thm:viola_generator}.  Let $N
= |R| = O(n^{q-1})$.

For $0 \le d \le q-1$, let $P_d \subseteq \F_q^N$ be the linear
subspace of all degree $d$ polynomials in $n$ variables with
coefficients in $\F_q$ and no constant term, evaluated at points on
$R$. I.e., all vectors in $P_d$ are of the form $( p(x) )_{x \in R}$
for some polynomial $p \in \F_q [X_1, \ldots, X_n]$ with $\deg(p) \leq
d$ and $p(0)=0$.  Note that $P_d$ is a linear code and by
Corollary~\ref{cor:viola_generator}, its relative distance is at least
$1-d/q-\epsilon$.

We define $C = P_1$ and for $\alpha \in \F_q^n$ we write $C(\alpha)
\in \F_q^N$ for the encoding of $\alpha$ under $C$; this corresponds
to the evaluations of the linear polynomial $\sum_{i=1}^n \alpha_i
X_i$ at all points $(X_1, \ldots, X_n)$ in $R$.  Conversely, for a
codeword $y \in C$ we write $\alpha = C^{-1}(y) \in \F_q^n$ for the
(unique) decoding of $y$.

From here on, we will ignore the parameter $\epsilon > 0$; it can be
chosen to be sufficiently small (independent of $q$ and the
inapproximability for $\MaxNAND$) and hence the effect of this can be
made insignificant.


\medskip
We now construct a linear code $\C'(\Psi)$ with variables as described
in Figure~\ref{fig:vars}.  As in the $\F_2$ case, the final code
$\C(\Psi)$ will consist of the projection of these variables to the
$Z_{ij}$'s and the $S_{ijk}$'s, which determine the remaining
variables by the constraints that we shall define momentarily.


\begin{figure}[ht]
  \centering
  \framebox{
    \begin{minipage}{0.85\textwidth}
      \begin{enumerate}
      \item For every $0 \le e \le 2(q-1)$ a vector $Y^e \in \F_q^N$.
      \item For every $0 \le e,f \le q-1$ a matrix $Y^{e,f} \in \F_q^{N^2}$.
      \item For every $1 \le i, j \le N$ a function $Z_{ij}: \F_q^2
        \rightarrow \F_q$ (i.e., a vector in $\F_q^{q^2}$).
      \item For every equation $x_k = 1 + x_i \cdot x_j$ in $\Psi$, a function
        $S_{ijk}: \F_2^2 \rightarrow \F_q$ (i.e., a vector in $\F_q^4$).
      \end{enumerate}
    \end{minipage}
  }
  \caption{Variables of $\C'(\Psi)$.}
  \label{fig:vars}
\end{figure}

Before we describe the constraints defining $\C'(\Psi)$ it is
instructive to describe the intended values of these variables.
Loosely speaking, the different $Y$ variables are supposed to be an
encoding of an assignment $\alpha \in \F_2^n$ to $\Psi$, the function
$S_{ijk}$ is a check that $\alpha$ satisfies the equation $x_k = 1 +
x_i \cdot x_j$, and the $Z_{ij}$ functions check that the $Y$
variables resemble a valid encoding of some $\alpha$.

Specifically, the variables are supposed to be assigned as
described in Figure~\ref{fig:intent}.

\begin{figure}[ht]
  \centering
  \framebox{
    \begin{minipage}{0.85\textwidth}
      \begin{enumerate}
      \item $Y^e$ is supposed to be $C(\alpha)^e$ (where we think of
        $\F_2^n$ as a subset of $\F_q^n$ in the obvious way) .
      \item $Y^{ef}$ is supposed to be $C(\alpha)^e \cdot (C(\alpha)^f)^\top$ (i.e., we should have $Y^{ef}(i,j) = C(\alpha)_i^e C(\alpha)_j^f$.
      \item $Z_{ij}$ is supposed to be the indicator function of
        $(C(\alpha)_i,C(\alpha)_j)$ (i.e., $Z_{ij}(x,y)$ should be $1$ if $x
        = C(\alpha)_i$ and $y = C(\alpha)_j$; and $0$ otherwise).
      \item $S_{ijk}$ is supposed to be the indicator function of
        $(\alpha_i, \alpha_j)$ (i.e., $S_{ijk}(a,b) = 1$ if $\alpha_i = a$
        and $\alpha_j = b$; and $0$ otherwise).
      \end{enumerate}
    \end{minipage}
  }
  \caption{Intent of variables of $\C'(\Psi)$.}
  \label{fig:intent}
\end{figure}

We categorize the constraints of $\C'(\Psi)$ as being of two different
types, namely \emph{basic constraints} that aim to enforce rudimentary
checks of Items 1 and 2 of Figure~\ref{fig:intent}, and \emph{consistency
  constraints} that aim to use the $Z_{ij}$'s and $S_{ijk}$'s to check
that the $Y^{ef}$ matrices are consistent with an encoding of a good
assignment to $\Psi$.  As a comparison with the reduction for $\F_2$
in Section~\ref{sec:2}, the basic constraints correspond to the
horizontal arrows on the upper side of Figure~\ref{fig:f2reduction},
and the consistency constraints correspond to the other arrows, i.e.,
Equations \eqref{eq:sijk-1x}-\eqref{eq:zij-4}.

Keeping the interpretation from Figure~\ref{fig:intent} in mind, the
basic constraints that we impose are given in
Figure~\ref{fig:cons1}.

\begin{savenotes}  
\begin{figure}[ht]
  \centering
  \framebox{
    \begin{minipage}{0.85\textwidth}
      \begin{enumerate}
      \item For $0 \le e \le q-1$, $Y^e \in P_e$.
      \item For $q \le e \le 2(q-1)$, $Y^e = Y^{e-(q-1)}$.
      \item For $0 \le e,f \le q-1$:
        \begin{enumerate}
        \item $Y^{ef} \in P_e \tensor P_f$.
        \item The diagonal of $Y^{ef}$ equals $Y^{e+f}$.
        \end{enumerate}
      \item For $0 \le e \le q-1$, the rows (resp.\ columns) of $Y^{0,e}$
        (resp.\ $Y^{e,0}$) are identical (and therefore equal to $Y^e$ as
        this is the diagonal).
      \item The matrix $Y^{q-1,q-1}$ is symmetric\footnote{In general we could add the constraint that $Y^{e,f} = (Y^{f,e})^\top$ for every $e,f$, but it turns out we only need it for
        the case $e = f = q-1$.}.
      \end{enumerate}
    \end{minipage}
  }
  \caption{Basic constraints of $\C'(\Psi)$.}
  \label{fig:cons1}
\end{figure}
\end{savenotes}

Note that all entries of the matrix $Y^{0,0}$ must be equal, and that
in the intended assignment they should equal the constant $1$.  For
notational convenience let us write $Y_0 \in \F_q$ for the value of
the entries of $Y^{0,0}$ (this variable plays the same role as the
variable $x_0$ in the reduction for $\F_2$ in Section~\ref{sec:2}).

We then turn to the consistency constraints of $\C'(\Psi)$, which
are described in Figure~\ref{fig:cons2}.

\begin{figure}[ht]
  \centering
  \framebox{
    \begin{minipage}{0.85\textwidth}
      \begin{enumerate}
      \item
        For every constraint $x_k = 1 + x_i \cdot x_j$ of $\Psi$, four constraints
        on $S_{ijk}$:
        \begin{equation}
          \label{eq:sijk}
          \begin{aligned}
            Y_0 &= \sum_{a,b \in \F_2} S_{ijk}(a,b)  &
            \alpha_i &= \sum_{a,b \in \F_2} a \cdot S_{ijk}(a, b) \\
            \alpha_j &= \sum_{a,b \in \F_2} b \cdot S_{ijk}(a, b) &
            \alpha_k &= \sum_{a,b \in \F_2} (1 \oplus a \cdot b) \cdot S_{ijk}(a, b).
          \end{aligned}
        \end{equation}
        (Here $\oplus$ denotes addition in $\F_2$ and the remaining summations are over $\F_q$.)
      \item 
        For every $i, j \in [N]$, $q^2$ constraints on $Z_{ij}$: for
        every $0 \le e,f \le q-1$ it must hold that
        \begin{eqnarray}
          Y^{e,f}(i,j) &=& \sum_{x,y \in \F_q} x^e y^f Z_{ij}(x,y).  \label{eq:y-zij}
        \end{eqnarray}
      \end{enumerate}
    \end{minipage}
  }
  \caption{Consistency constraints of $\C'(\Psi)$.}
  \label{fig:cons2}
\end{figure}

The four equations \eqref{eq:sijk} are the same as Equations
\eqref{eq:sijk-1}-\eqref{eq:sijk-4} from the $\F_2$ reduction, the
only difference being that they are now constraints over $\F_q$.  Note
that instead of $Y_0$ we would like to use the constant $1$ in the
above constraint, but as we are not allowed to do this we use $Y_0$,
which, as mentioned above, is intended to equal $1$. Note also that
$Y^1 = C(\alpha)$, and thus $\alpha$ is implicitly defined by
$Y^1$. If one wanted to be precise, one would write $C^{-1}(Y^1)_i$
instead of $\alpha_i$ in the above equations.

Note that the function $S_{ijk}$ is an invertible linear
transformation of $\{Y_0, \alpha_i, \alpha_j, \alpha_k\}$ and hence is
non-zero if and only if one of those four variables are non-zero.
Similarly, from \eqref{eq:y-zij} it follows that $Z_{ij}$ is an
invertible linear transformation of the set of $(i,j)$'th entries of
the $q^2$ different matrices $\{Y^{ef}\}_{0 \le e,f \le q-1}$ (this is
an immediate consequence of Fact~\ref{fact:polybasis}).  In particular
$Z_{ij}$ is non-zero if and only if the $(i,j)$'th entry of some
matrix $Y^{e,f}$ is non-zero.

The final code $\C(\Psi)$ contains the projection of these variables
to the functions $Z_{ij}$ and the functions $S_{ijk}$, with each
$S_{ijk}$ repeated $r \ge 1$ times.  Note that $\C(\Psi)$ is a
subspace of $\F_q^M$ where $M = (qN)^2 + 4rm$. The completeness and soundness are as follows.

\begin{lemma}[Completeness]
  \label{lemma:completeness}
  If $\Opt(\Psi) = 1$ then $$d(\C(\Psi)) \le N^2 + rm.$$
\end{lemma}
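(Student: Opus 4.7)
The plan is to exhibit an explicit codeword of weight exactly $N^2 + rm$ in $\C(\Psi)$. Given a satisfying assignment $\alpha \in \F_2^n \subseteq \F_q^n$ for $\Psi$, I would set the variables of $\C'(\Psi)$ to the intended values from Figure~\ref{fig:intent}: namely $Y_0 = 1$, $Y^e = C(\alpha)^e$ for $0 \le e \le q-1$, $Y^e = Y^{e-(q-1)}$ for $q \le e \le 2(q-1)$, $Y^{e,f}(i,j) = C(\alpha)_i^e \cdot C(\alpha)_j^f$, and $Z_{ij}$, $S_{ijk}$ the appropriate indicator functions of $(C(\alpha)_i, C(\alpha)_j)$ and $(\alpha_i, \alpha_j)$ respectively. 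Then I would (a) verify that this assignment lies in $\C'(\Psi)$ by checking both the basic and consistency constraints, and (b) bound the weight of its projection onto the $Z_{ij}$ and $S_{ijk}$ coordinates.

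For the basic constraints of Figure~\ref{fig:cons1}, the verification is essentially by definition: $C(\alpha)^e$ is the evaluation on $R$ of the polynomial $\bigl(\sum_i \alpha_i X_i\bigr)^e$, which has degree at most $e$, so $Y^e \in P_e$. The matrix $Y^{e,f}$ factors as $C(\alpha)^e \bigl(C(\alpha)^f\bigr)^\top$, so each row is a multiple of $C(\alpha)^f \in P_f$ and each column is a multiple of $C(\alpha)^e \in P_e$, giving $Y^{e,f} \in P_e \tensor P_f$. The diagonal equals $C(\alpha)^{e+f} = Y^{e+f}$, and symmetry of $Y^{q-1,q-1}$ is immediate from the factored form. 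The remaining items (rows/columns of $Y^{0,e}$ and $Y^{e,0}$) follow since $C(\alpha)^0$ is the all-ones vector.

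For the consistency constraints of Figure~\ref{fig:cons2}, each $S_{ijk}$ is supported on the single point $(\alpha_i, \alpha_j) \in \F_2^2$ with value $1$, so the left-hand sides of \eqref{eq:sijk} evaluate to $1$, $\alpha_i$, $\alpha_j$, and $1 \oplus \alpha_i \alpha_j$, matching $Y_0 = 1$, $\alpha_i$, $\alpha_j$, and $\alpha_k$ (the last equality uses that $\alpha$ satisfies the NAND equation). Similarly, $Z_{ij}$ is supported on $(C(\alpha)_i, C(\alpha)_j)$ with value $1$, so $\sum_{x,y} x^e y^f Z_{ij}(x,y) = C(\alpha)_i^e \cdot C(\alpha)_j^f = Y^{e,f}(i,j)$, verifying \eqref{eq:y-zij}.

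Finally, for the weight: each indicator function $Z_{ij}$ contributes exactly one non-zero coordinate among its $q^2$ values, giving $N^2$ non-zero coordinates in total from the $Z$-block. Similarly, each indicator $S_{ijk}$ contributes exactly one non-zero coordinate among its $4$ values; since each $S_{ijk}$ is repeated $r$ times and there are $m$ constraints, the $S$-block contributes $rm$ non-zero coordinates. Summing gives weight $N^2 + rm$, proving the bound. The main obstacle is really just the bookkeeping over all the constraint categories; the algebraic content is straightforward once the intended assignment is written down.
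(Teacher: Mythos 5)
Your proof is correct and follows essentially the same route as the paper's: exhibit the intended codeword from Figure~\ref{fig:intent}, check that it satisfies all constraints, and count exactly one non-zero entry per $Z_{ij}$ and per (repeated) $S_{ijk}$ to get weight $N^2 + rm$. The paper dispatches the basic constraints with ``clearly''; you spell them out, which is a reasonable elaboration but not a different argument.
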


\begin{lemma}[Soundness]
  \label{lemma:soundness}
  If $\Opt(\Psi) \le 1-\delta$ then $$d(\C(\Psi)) \ge \min\left(N^2 +
    (1+\delta)rm, (1+1/q)N^2 \right).$$
\end{lemma}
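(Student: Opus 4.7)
The weight of a nonzero codeword of $\C(\Psi)$ equals $W_Z + r \cdot W_S$, where $W_Z = \sum_{i,j} \mathrm{wt}(Z_{ij})$ and $W_S = \sum_{i,j,k} \mathrm{wt}(S_{ijk})$. The primary case split is on whether the scalar $Y_0$ (the common value of all entries of $Y^{0,0}$) is zero.

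\smallskip
\emph{Case $Y_0 \neq 0$.} Specializing the consistency constraints at $(e,f) = (0,0)$ gives $\sum_{x,y} Z_{ij}(x,y) = Y^{0,0}(i,j) = Y_0 \neq 0$ and $\sum_{a,b} S_{ijk}(a,b) = Y_0 \neq 0$, so $\mathrm{wt}(Z_{ij}) \ge 1$ and $\mathrm{wt}(S_{ijk}) \ge 1$ for every index, giving $W_Z \ge N^2$. Moreover, if some $S_{ijk}$ has weight exactly $1$ with its nonzero entry at coordinate $(a,b) \in \F_2^2$, the remaining three equations in \eqref{eq:sijk} force $(\alpha_i,\alpha_j,\alpha_k) = Y_0 \cdot (a,b,1 \oplus ab)$; in particular the rescaled assignment $\alpha/Y_0 \in \F_2^n$ satisfies the $(i,j,k)$-th NAND constraint. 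Since $\Psi$ is $(1-\delta)$-unsatisfiable, at least $\delta m$ of the $S_{ijk}$'s must have weight $\ge 2$, so $W_S \ge (1+\delta)m$ and the codeword weight exceeds $N^2 + (1+\delta) r m$.

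\smallskip
\emph{Case $Y_0 = 0$.} Now $\sum_{x,y} Z_{ij}(x,y) = 0$ for every $(i,j)$, so any nonzero $Z_{ij}$ has weight $\ge 2$; I aim to show $W_Z \ge (1+1/q) N^2$. I further split on whether the vectors $Y^1,\dots,Y^{q-1}$ all vanish. If some $Y^{e^*} \ne 0$, then since the rows of $Y^{0,e^*}$ and the columns of $Y^{e^*,0}$ all equal $Y^{e^*}$, every $(i,j)$ with $Y^{e^*}(i) \ne 0$ or $Y^{e^*}(j) \ne 0$ makes $Z_{ij}$ nonzero, and $|\mathrm{supp}(Y^{e^*})| \ge (1 - e^*/q - \epsilon)N$ by the distance of $P_{e^*}$. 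If instead all $Y^e = 0$ for $0 \le e \le q-1$, the identification $Y^{e+f} = Y^{e+f-(q-1)}$ in the basic constraints makes the diagonal of \emph{every} $Y^{e,f}$ identically zero; then Lemma~\ref{lemma:zerodiag} applied to the symmetric matrix $Y^{q-1,q-1} \in P_{q-1} \tensor P_{q-1}$ gives a quantitative lower bound on its support, and any non-symmetric $Y^{e,f}$ that happens to be nonzero gets handled by the tensor-product distance combined with its zero diagonal.

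\smallskip
The main technical obstacle is the quantitative balance in Case 2: when $e^*$ is close to $q-1$ the code $P_{e^*}$ has relative distance only $\approx 1/q$, so merely multiplying (number of nonzero $Z_{ij}$'s) by (weight $\ge 2$) falls short of $(1+1/q)N^2$. The extra slack must come from observing that each nonzero $Z_{ij}$ has weight substantially larger than $2$ when many of the moment equations $\sum_{x,y} x^{e'} y^{f'} Z_{ij}(x,y) = Y^{e',f'}(i,j)$ vanish simultaneously: by the invertibility of \eqref{eq:y-zij} guaranteed by Fact~\ref{fact:polybasis}, a $Z_{ij}$ whose row sums and column sums all vanish has minimum weight $\ge 4$, and progressively more vanishing moments drive this per-$Z_{ij}$ weight up further. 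The core calculation is then to verify, across all sub-cases indexed by which $Y^e$'s and which $Y^{e,f}$'s are nonzero, that the product of (number of nonzero $Z_{ij}$'s) and (minimum per-$Z_{ij}$ weight forced by the vanishing moments) always clears the threshold $(1 + 1/q)N^2$.
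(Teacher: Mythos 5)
Your Case~1 ($Y_0 \neq 0$) analysis is essentially correct and matches the paper's argument; the only caveat is that $\alpha/Y_0$ need not lie in $\F_2^n$ on coordinates $i$ appearing only in constraints with $\mathrm{wt}(S_{ijk}) \ge 2$, so one should use the paper's assignment $\beta_i = [\alpha_i \ne 0]$ (which agrees with $\alpha_i/Y_0$ wherever $S_{ijk}$ has weight one), but this is a cosmetic fix.

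The $Y_0 = 0$ case, however, has a genuine gap. You explicitly defer ``the core calculation'' --- and that calculation is the entire substance of the lemma. Moreover, your proposed sub-case decomposition does not lead there. First, splitting on whether some $Y^{e^*} \ne 0$ with $1 \le e^* \le q-1$ and then multiplying the support of $Y^{e^*}$ by a per-$Z_{ij}$ weight of $2$ gives $2(1 - (e^*/q)^2)N^2$, which already falls below $(1+1/q)N^2$ for $q \ge 3$ when $e^*$ is close to $q-1$; so this branch cannot be closed by the crude bound you use, and you acknowledge as much. Second, in the sub-case where all $Y^e = 0$, Lemma~\ref{lemma:zerodiag} requires \emph{both} a zero diagonal \emph{and} symmetry, but symmetry is only imposed on $Y^{q-1,q-1}$; your sentence ``any non-symmetric $Y^{e,f}$ that happens to be nonzero gets handled by the tensor-product distance combined with its zero diagonal'' is simply false --- a zero diagonal alone gives no distance boost, and even with the boost the product (fraction of nonzero entries of $Y^{e,f}$) $\times$ (weight $\ge 2$) is far below $(1+1/q)N^2$ when $e,f$ are large. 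Third, the remark that ``row sums and column sums all vanish $\Rightarrow$ weight $\ge 4$'' is not the right quantitative statement: the constraints that actually vanish are the low moments $\sum_{x,y} x^a y^b Z_{ij}(x,y)$ for $a+b$ below the minimal nonzero $e+f$, and the correct per-$Z_{ij}$ weight bound (Lemmas~\ref{lemma:fdist_lowend} and \ref{lemma:fdist_highend} in the paper: $d+1$ for small $d$, $q(d+2-q)$ for large $d$, reaching $q^2$ only at $d=2(q-1)$) is exactly what makes the trade-off close.

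What is missing, concretely, is the paper's organizing idea: case-split on the \emph{minimal} $d = e+f$ such that $Y^{e,f} \ne 0$ (not on which $Y^e$ is nonzero). Minimality of $d$ forces $\sum_{x,y} x^a y^b Z_{ij}(x,y) = 0$ for all $a+b<d$, and the two combinatorial lemmas convert this into a sharp per-$Z_{ij}$ weight bound as a function of $d$, while the tensor-product distance of $P_e \otimes P_f$ gives the count of nonzero $Z_{ij}$'s. The product of these two quantities is then checked, case by case in $d$, to always exceed $\tfrac{4}{3}N^2$ for $1 \le d \le 2(q-1)-1$; the boundary case $d = 2(q-1)$ (i.e.\ $e=f=q-1$) is the tight one, and there the extra zero-diagonal-plus-symmetry constraints together with Lemma~\ref{lemma:zerodiag} give precisely $(1+1/q)N^2$. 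You have the right intuition that vanishing moments must compensate for shrinking distance, but you have neither the correct decomposition nor the correct per-$Z_{ij}$ weight function, and the calculation you defer is the actual proof.
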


\begin{lemma}[$\C$ is a Good Code]
  \label{lemma:goodcode}
  The dimension of $\C(\Psi)$ is $\Omega(N^2)$, and the
  distance is at least $N^2$.
\end{lemma}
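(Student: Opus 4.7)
For the dimension bound, I would exhibit an explicit $\Omega(N^2)$-dimensional linear subspace of $\C(\Psi)$. Consider the codewords obtained by setting $Y_0 = 0$, $\alpha = 0$, $Y^e = 0$ for all $e$, and $Y^{e, f} = 0$ for all $(e, f) \ne (q-1, q-1)$, while letting $Y^{q-1, q-1}$ range over all symmetric matrices in $P_{q-1} \tensor P_{q-1}$ with zero diagonal. By Corollary~\ref{cor:viola_generator} the code $P_{q-1}$ has positive relative distance, so the evaluation map on $R$ is injective and $D := \dim(P_{q-1}) = \Theta(n^{q-1}) = \Theta(N)$. Then the space of symmetric zero-diagonal matrices in $P_{q-1} \tensor P_{q-1}$ has dimension at least $\binom{D+1}{2} - N = \Omega(N^2)$, and the map into $\C(\Psi)$ is injective because Equation~\eqref{eq:y-zij} with $e = f = q-1$ recovers each $Y^{q-1, q-1}(i, j)$ from the corresponding $Z_{ij}$.

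For the distance bound, the plan is a case analysis on whether $Y_0 = 0$. In the case $Y_0 \ne 0$: Equation~\eqref{eq:sijk} forces $\sum_{a, b} S_{ijk}(a, b) = Y_0 \ne 0$ for every constraint, so each $S_{ijk}$ contributes at least one non-zero entry, giving at least $rm$ from the $S$'s; simultaneously $Y^{0, 0}$ is the non-zero matrix with every entry equal to $Y_0$, so Equation~\eqref{eq:y-zij} with $e = f = 0$ forces every $Z_{ij}$ to sum to $Y_0 \ne 0$ and hence contribute at least one entry, giving at least $N^2$ from the $Z$'s. The total weight is thus $\ge N^2 + rm \ge N^2$.

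The main obstacle is the remaining case $Y_0 = 0$, where I aim to prove the stronger bound $(1 + 1/q) N^2 > N^2$ (essentially the same estimate that drives the $Y_0 = 0$ branch of the soundness Lemma~\ref{lemma:soundness}, which is independent of the $\Opt(\Psi)$ hypothesis). Sub-split further according to whether some $Y^e$ is non-zero. If so, the constraint that the rows of $Y^{0, e}$ are all identical, combined with the diagonal constraint, forces every row of $Y^{0, e}$ to equal $Y^e$; then by Equation~\eqref{eq:y-zij} many $Z_{ij}$'s are non-zero, each with at least two entries since $Y_0 = 0$ forces $\sum Z_{ij} = 0$, and careful counting (using the relative distance of $P_e$ together with the dual contribution from $Y^{e, 0}$ and from the non-trivial $Y^{e', f'}$'s whose diagonals must match the non-zero $Y^e$) gives the bound. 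Otherwise all $Y^e$ vanish but some $Y^{e, f}$ does not; here every $Y^{e, f}$ has zero diagonal, and I would use the double-counting identity $\sum_{(i, j)} |\text{supp}(Z_{ij})| = \sum_{(x, y)} |\text{supp}(M_{x, y})|$, where $M_{x, y}$ is the $N \times N$ matrix with $(i, j)$-entry $Z_{ij}(x, y)$ — a specific linear combination of the $Y^{e, f}$'s by Fact~\ref{fact:polybasis} — which lies in $P_{q-1} \tensor P_{q-1}$ and has zero diagonal. Crucially $M_{0, 0} = Y^{q-1, q-1}$ is symmetric, so Lemma~\ref{lemma:zerodiag} yields the $(1 + 1/q)$ bonus on its support. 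Carefully accounting for how many $M_{x, y}$'s remain non-zero across mixed configurations of the $Y^{e, f}$'s, while controlling cancellations when several of them contribute to the same $M_{x, y}$, is the principal technical difficulty.
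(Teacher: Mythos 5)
Your dimension argument is essentially the paper's: both exhibit an $\Omega(N^2)$-dimensional family of codewords by letting $Y^{q-1,q-1}$ range over symmetric matrices in $P_{q-1}\tensor P_{q-1}$ (using Claim~\ref{claim:tensorrate}) and observing the remaining variables are then determined. The only cosmetic difference is that you fix $\alpha = 0$ and force the diagonal of $Y^{q-1,q-1}$ to vanish (losing $\le N$ dimensions), while the paper parametrizes over $\alpha \in \F_q^n$ as well and sets the diagonal to the prescribed value $Y^{q-1}$; both give $\Omega(N^2)$.

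For the distance, the $Y_0 \ne 0$ case is handled exactly as in the paper. For $Y_0 = 0$, you make the key observation yourself --- that the $Y_0 = 0$ branch of the soundness analysis yields $(1+1/q)N^2$ and never uses the hypothesis $\Opt(\Psi)\le 1-\delta$ --- and that observation \emph{is} the paper's entire proof: it simply invokes Cases~2--5 of the proof of Lemma~\ref{lemma:soundness} verbatim. Instead of stopping there, you then sketch a fresh derivation via the matrices $M_{x,y}$ with $(i,j)$-entry $Z_{ij}(x,y)$ and the double-counting identity $\sum_{i,j}|\mathrm{supp}(Z_{ij})| = \sum_{x,y}|\mathrm{supp}(M_{x,y})|$. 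This route is not carried through: you yourself flag ``controlling cancellations when several of them contribute to the same $M_{x,y}$'' as the principal technical difficulty, and indeed that is precisely the difficulty that the paper's case analysis on the minimal non-zero $Y^{e,f}$ (together with Lemmas~\ref{lemma:fdist_lowend} and \ref{lemma:fdist_highend}) is designed to resolve. So while nothing in your sketch is wrong, the $Y_0=0$ case as written contains a genuine gap; the clean fix is simply to cite the soundness proof rather than attempt the $M_{x,y}$ accounting from scratch.
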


Setting $r \approx \frac{N^2}{(1+\delta)q m}$,
Lemmas~\ref{lemma:completeness}-\ref{lemma:goodcode} give
Theorem~\ref{thm:main} (for the case $q \ge 3$).

In the following three subsections we prove the three lemmas.

\subsection{Proof of Completeness}

We first consider the Completeness Lemma~\ref{lemma:completeness},
which is straightforward to prove.

\begin{proof}[Proof of Lemma~\ref{lemma:completeness}]
  Given a satisfying assignment $\alpha \in \F_2^n$ to the set of
  quadratic equations, we construct a good codeword by following the
  intent described in Figure~\ref{fig:intent}.  Clearly this satisfies
  all the basic constraints.

  To check the constraints on $Z_{ij}$, recall that it is defined as
  $$
  Z_{ij}(x,y) = \left\{ \begin{array}{ll}
      1 & \text{if $(x,y) = (C(\alpha)_i,C(\alpha)_j)$} \\
      0 & \text{otherwise}.
    \end{array}\right..
  $$
  This choice of $Z_{ij}$ satisfies its $q^2$ constraints since for any $0
  \le e,f \le q-1$
  $$
  \sum_{x,y} x^e y^f Z_{ij}(x,y) = C(\alpha)_i^e C(\alpha)_j^f = Y^{ef}(i,j).
  $$
  Analogously, for the constraints on $S_{ijk}$ we have
  $$
  S_{ijk}(a,b) = \left\{
  \begin{array}{ll}
    1 & \text{if $(a,b) = (\alpha_i,\alpha_j)$} \\
    0 & \text{otherwise}.
  \end{array}\right.,
  $$
  which is again easily verified to satisfy its four
  constraints and hence this constitutes a codeword.

  The weight of the codeword is $N^2 + rm$, since each $Z_{ij}$ and
  each $S_{ijk}$ has exactly one non-zero coordinate.
\end{proof}

\subsection{Proof of Soundness}

In this section we prove the Soundness Lemma~\ref{lemma:soundness},
which is the part that requires the most work.  Let us first describe
the intuition.

In the analysis, we view codewords where $Y_0 \ne 0$ as resembling a
valid encoding of some $\alpha \in \F_2^n$ and for these we shall
argue that small weight corresponds to a good assignment to $\Psi$.

Most of the complication comes from analysing codewords where $Y_0 =
0$, which we think of as not resembling a valid encoding of some
$\alpha$.  For such codewords we argue that there must be a lot of
weight on the $Z_{ij}$'s.  To pull off this argument, we look at a
non-zero $Y^{e,f}$ that has $d=e+f$ minimal.  Then we look at the set
of $Z_{ij}$'s that are non-zero.  The total number of such $Z_{ij}$'s
can be lower bounded using the distance bound on $Y^{e,f}$ (though
this bound unfortunately gets worse as $d$ increases).  The fact that
every $Y^{e',f'}$ with $e'+f' < d$ is zero gives a set of
$\Theta(d^2)$ linear constraints on every such $Z_{ij}$.  These
constraints induce a linear code over $\F_q^{q^2}$ to which each
$Z_{ij}$ must belong.  We then argue that as $d$ increases, the
distance of this linear code increases as well, meaning that the
non-zero $Z_{ij}$'s must have an increasingly larger number of
non-zero entries.  This increased distance balances the decrease in
the number of non-zero $Z_{ij}$'s, allowing us to conclude that no
matter the value of $d$, the total number of non-zero entries among
all the $Z_{ij}$'s is always large.

Before we proceed with the formal proof of the soundness, let us state
two lemmas that we use to obtain lower bounds on the distance of
$Z_{ij}$.  The proofs of these two lemmas can be found in
Section~\ref{sect:lemma_proofs}.  First, we have a lemma for the case
when $d$ is small.

\begin{lemma}
  \label{lemma:fdist_lowend}
  Suppose $f: \F_q \times \F_q \rightarrow \F_q$ is a non-zero
  function satisfying
  $$
    \sum_{x,y \in \F_q} x^ay^b f(x, y) = 0
    $$ for every $(a,b)$ such that $0 \le a, b \le q-1$ and $a+b < d$ for some
    $0 \le d \le q-1$.  Then $f(x,y) \ne 0$ for at least $d+1$ points in
    $\F_q^2$.
\end{lemma}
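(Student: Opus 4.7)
The plan is to prove the contrapositive: if $f$ is supported on a set $S \subseteq \F_q^2$ with $|S| \le d$, then $f \equiv 0$. The main idea is to exhibit, for each $s \in S$, an explicit polynomial $g_s$ of total degree at most $d-1$ whose evaluation vector on $\F_q^2$ is supported on $S$ and equals $1$ at $s$. Pairing $f$ against such a $g_s$ via the bilinear form $\langle f, g \rangle = \sum_{x,y} f(x,y) g(x,y)$ will then force $f(s)=0$.

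Concretely, suppose $|S| \le d$ and fix $s \in S$. For every other point $t \in S \setminus \{s\}$, the points $s$ and $t$ differ in at least one coordinate; choose such a coordinate and form a degree-one polynomial $L_t(x,y)$ that vanishes at $t$ but not at $s$ (namely $x - t_1$ or $y - t_2$). Define
\[
g_s(x,y) \;=\; \prod_{t \in S \setminus \{s\}} \frac{L_t(x,y)}{L_t(s)}.
\]
This product has total degree at most $|S|-1 \le d-1$, individual degrees at most $d-1 \le q-2$ in each variable, and satisfies $g_s(s) = 1$ together with $g_s(t) = 0$ for $t \in S \setminus \{s\}$. Since $g_s$ is a linear combination of monomials $x^a y^b$ with $a+b \le d-1 < d$ (and $0 \le a,b \le q-1$), the hypothesis on $f$ gives
\[
\sum_{x,y \in \F_q} g_s(x,y) f(x,y) \;=\; 0
\]
by linearity. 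On the other hand, since $f$ is supported on $S$, the same sum equals $\sum_{t \in S} g_s(t) f(t) = g_s(s) f(s) = f(s)$. Hence $f(s) = 0$ for every $s \in S$, contradicting that $f$ is nonzero.

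The key step is the explicit construction of the interpolating polynomial $g_s$ of controlled total degree. There is no real obstacle here: because the support has at most $d \le q-1$ points, separating $s$ from each of the remaining at most $d-1$ points uses only one linear factor per point, and the degree bound $d-1 \le q-2$ keeps us safely within the uniqueness range of Fact~\ref{fact:polybasis}, so the pairing argument goes through without any reduction modulo $x^q - x$ or $y^q - y$.
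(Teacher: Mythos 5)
Your proof is correct, and it takes a genuinely different route from the paper's. The paper argues via a case split on the marginal $g(x) = \sum_{y} f(x,y)$: if $g$ is non-zero, the one-variable moment conditions $\sum_x x^a g(x) = 0$ for $a < d$ plus a Vandermonde argument force $g$ (hence $f$) to be non-zero at $\ge d+1$ points; if $g \equiv 0$, then each $x$ in the support of $f$ contributes at least two non-zero entries, and a Kronecker-product-of-Vandermonde argument on the projections $X, Y$ of the support shows $|X| + |Y| \ge d+2$, which recovers the bound. Your approach instead proves the contrapositive directly by interpolation: given a support $S$ of size $\le d$, you build for each $s \in S$ a product of coordinate-linear factors $g_s$ that acts as a Kronecker delta on $S$ and has total degree $\le d-1$ with individual degrees $\le d-1 \le q-2$, so it expands entirely in monomials $x^a y^b$ with $a+b < d$ and $0 \le a,b \le q-1$ and the hypothesis forces $\langle f, g_s\rangle = f(s) = 0$. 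This avoids the case analysis and the Vandermonde machinery altogether, is a standard combinatorial-nullstellensatz/interpolation argument, and generalizes with no extra effort to functions on $\F_q^k$ for any $k$; the paper's approach, by contrast, does not obviously extend beyond two variables without further work. Both are complete and correct; yours is the more elementary and more portable argument.
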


Second, we have a lemma for the case when $d$ is large.

\begin{lemma}
  \label{lemma:fdist_highend}
  Suppose $f: \F_q \times \F_q \rightarrow \F_q$ is a non-zero
  function satisfying
  $$
    \sum_{x,y \in \F_q} x^a y^b f(x, y) = 0
    $$
  for every $(a,b)$ such that $0 \leq a, b \leq q-1$ and $a+b < d$
  for some $q-1 \leq d \leq 2(q-1)$.   Then
  $f(x,y) \ne 0$ for at least $q(d+2-q)$ points in $\F_q^2$.
\end{lemma}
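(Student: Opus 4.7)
The plan is to express $f$ in the monomial basis from Fact~\ref{fact:polybasis}, translate the vanishing moment conditions into linear conditions on the coefficients, and then show that (for $d \ge q$) $f$ is represented by a polynomial of low total degree, at which point the Schwarz--Zippel Lemma~\ref{lemma:schwarzippel} yields the desired bound. The boundary case $d = q-1$ reduces directly to Lemma~\ref{lemma:fdist_lowend}.

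First I would write $f(x, y) = \sum_{u, v = 0}^{q-1} c_{u, v} x^u y^v$ and compute, for any $a, b \in [0, q-1]$,
$$\sum_{x, y \in \F_q} x^a y^b f(x, y) = \sum_{u, v} c_{u, v} \Bigl( \sum_{x \in \F_q} x^{u+a}\Bigr) \Bigl( \sum_{y \in \F_q} y^{v+b}\Bigr).$$
Using the standard identity that $\sum_{x \in \F_q} x^k$ equals $-1$ when $k$ is a positive multiple of $q - 1$ and $0$ otherwise, and noting that $u + a \in [0, 2(q-1)]$, the factor $\sum_x x^{u+a}$ is non-zero exactly when $u + a \in \{q-1,\, 2(q-1)\}$; likewise for $v + b$. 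Setting $D := 2q - 1 - d \in [1, q]$, I would then case-analyze on whether $a$ (resp.\ $b$) equals $q-1$, which yields three flavors of constraint: for $a, b < q-1$ with $a + b < d$, the single coefficient $c_{q-1-a,\, q-1-b}$ must vanish, which is equivalent to $c_{u, v} = 0$ for all $u, v \ge 1$ with $u + v \ge D$; for the boundary indices, one gets instead sum constraints $c_{0, v} + c_{q-1, v} = 0$ for $v \ge D$ and, symmetrically, $c_{u, 0} + c_{u, q-1} = 0$ for $u \ge D$.

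Assuming $d \ge q$ (so $D \le q - 1$), the bulk conditions already force $c_{q-1, v} = 0$ for every $v \ge 1$ (since $u + v = q-1+v \ge q > D$) and symmetrically $c_{u, q-1} = 0$ for every $u \ge 1$. Feeding these into the sum constraints collapses them to $c_{0, v} = 0$ for $v \ge D$ and $c_{u, 0} = 0$ for $u \ge D$. Altogether, $c_{u, v} = 0$ whenever $u + v \ge D$, so $f$ is represented by a non-zero polynomial of total degree at most $D - 1$. Schwarz--Zippel then gives at most $(D - 1) q$ zeros of $f$ in $\F_q^2$, so at least
$$q^2 - (D - 1) q \;=\; q(q - D + 1) \;=\; q(d + 2 - q)$$
non-zero values. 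For the remaining boundary case $d = q - 1$, the target bound equals $q = d + 1$, which is exactly what Lemma~\ref{lemma:fdist_lowend} delivers.

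The main obstacle, I expect, is the careful bookkeeping in the second step: the moments indexed by $(q-1, b)$ and $(a, q-1)$ produce sum-vanishing constraints rather than single-coefficient vanishings, and one must chain them with the bulk constraints to arrive at the clean conclusion $c_{u, v} = 0$ for every $u + v \ge D$. Once that reduction to the low-degree polynomial case is secured, the Schwarz--Zippel finish is routine.
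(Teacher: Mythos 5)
Your proof is correct and takes essentially the same route as the paper: both arguments reduce to showing that the moment conditions force $f$ to coincide with a polynomial of total degree at most $2(q-1)-d$, and then invoke Schwarz--Zippel to get the $q(d+2-q)$ bound. The only difference is in how the degree bound is obtained---the paper counts dimensions and exhibits the monomials $\{x^e y^\ell : e+\ell \le 2(q-1)-d\}$ as a basis for the solution space, whereas you expand $f$ in the monomial basis and track directly which coefficients the constraints annihilate; the boundary casework you flag (indices $a=q-1$ or $b=q-1$ yielding sum constraints $c_{0,v}+c_{q-1,v}=0$ rather than single-coefficient vanishings) is precisely the bookkeeping that the paper's dimension count sidesteps.
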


We are now ready to proceed with the proof of soundness.

\begin{proof}[Proof of Lemma~\ref{lemma:soundness}]

  Let $\{Z_{ij}\}_{i,j \in [N]}$ and $\{S_{ijk}\}_{(i,j,k) \in \Psi}$
  be some non-zero codeword of $\C(\Psi)$, and consider the induced
  values of the $Y$ variables.

  Let $(e,f)$ be such that $Y^{e,f}$ is non-zero and $e+f$ is minimal
  (breaking ties arbitrarily).  Since the codeword is non-zero it
  follows that such an $(e,f)$ exists (by invertibility of
  \eqref{eq:sijk} and \eqref{eq:y-zij}).

  We do a case analysis based on the value of $e+f$.

  \paragraph{Case 1:  $e = f = 0$.}

  This is the case when $Y_0 \ne 0$.  In other words, we think of the
  $Y$ variables as resembling a valid encoding of some assignment to
  $\Psi$, so that the soundness of $\Psi$ comes into play.

  If $e = f = 0$ we have that all $Z_{ij}$'s and $S_{ijk}$'s are
  non-zero and hence the weight is at least $N^2 + rm$.  We will show
  that the soundness condition of $\Psi$ implies that a $\delta$
  fraction of the $S_{ijk}$'s must in fact have two non-zero entries,
  so that the total weight of the codeword is at least
  $$
  N^2 + (1+\delta)rm.
  $$

  To see this, construct an assignment to the quadratic equations
  instance as follows. Let $\alpha = C^{-1}(Y) \in \F_q^n$.  From the
  $\alpha_i$, $i\in [n]$, we define a boolean assignment $\beta_i$ as
  follows:
  $\beta_i = 0$ if $\alpha_i = 0$, and $\beta_i =
  1$ otherwise.  We claim that every constraint $x_k = 1+x_i \cdot x_j$
  for which $S_{ijk}$ only has a single non-zero entry is satisfied by
  $\beta$.  Indeed, suppose that $S_{ijk}(a,b) = c \ne 0$ and all
  other values of $S_{ijk}$ are $0$.  Then the constraints on
  $S_{ijk}$ imply that
  \begin{align*}
    \alpha_i &= a \cdot c &
    \alpha_j &= b \cdot c &
    \alpha_k &= (1\oplus ab) \cdot c.
  \end{align*}
  which implies that $\beta_i = a$, $\beta_j = b$, and $\beta_k =
  1\oplus ab = 1 \oplus \beta_i \cdot \beta_j$.
  By the soundness assumption $\Opt(\Psi) \le 1-\delta$, and hence at least a
  $\delta$ fraction of the constraints are not satisfied by $\beta$;
   the corresponding $S_{ijk}$'s must therefore have at least two
  non-zero entries.

  \paragraph{Case 2: $0 < e+f < q-1$. }

  Let $d = e+f$. The minimality of $e+f$ implies that $Y^{a,b} \equiv 0$
  for all $a+b < d$. From
  Equation \eqref{eq:y-zij}, we have that for all $a+b < d$,
   $\sum_{x,y \in \F_q} x^a y^b Z_{ij}(x,y)=0$. Applying
   Lemma~\ref{lemma:fdist_lowend},  each non-zero
  $Z_{ij}$ has at least $d+1$ non-zero entries.  Furthermore the
  fraction of non-zero $Z_{ij}$'s is at least $1 - d/q$. This is
  because the distance of the codes $P_e$ and $P_f$
  is at least $1-e/q$ and $1-f/q$ respectively, and hence the distance of
  the code $P_e \tensor P_f$ is at least $(1-e/q)(1-f/q) \geq 1-d/q$. Thus
  at least a $1-d/q$ fraction of entries of $Y^{e,f}$ are non-zero and by
  Equation \eqref{eq:y-zij}, the same applies to $Z_{ij}$.
     Hence the
  total number of non-zero entries over all $Z_{ij}(\cdot,\cdot)$ is at least
  $$
  N^2 (1-d/q)(d+1) \ge N^2 \frac{2(q-1)}{q} \ge \frac{4}{3} N^2,
  $$
  where the first inequality follows by noting that for $1 \le d \le
  q-2$ the left hand side is minimized by $d =
  1$ and $d = q-2$, and the second inequality follows from the
  assumption $q \ge 3$.

  \paragraph{Case 3: $e+f = q-1$. }

  In this case, either of Lemma~\ref{lemma:fdist_lowend} or
  Lemma~\ref{lemma:fdist_highend} gives that any non-zero $Z_{ij}$ has
  $q$ non-zero entries.

  The fraction of $Z_{ij}$'s that are non-zero is at least
  $(1-e/q)(1-f/q) = 1/q + ef/q^2$.  Unfortunately, if $ef = 0$ this
  bound is not good enough.  However, note that if $Y^{0,q-1}$ (or
  $Y^{q-1,0}$) is non-zero then so is $Y^{q-1}$ (by
  Figure~\ref{fig:cons1}, item 4) implying that $Y^{q-2,1}$ is
  non-zero (since by Figure~\ref{fig:cons1}, item 3(b), it has
  $Y^{q-1}$ as diagonal).  Hence we may assume without loss of
  generality that $ef \ge q-2$ so that at least a fraction $1/q +
  (q-2)/q^2 = 2(q-1)/q^2$ of the $Z_{ij}$'s are non-zero.

  Thus we see that the total weight of the codeword is at least
  $$
  N^2 \cdot \frac{2(q-1)}{q^2} \cdot q = N^2 \cdot \frac{2(q-1)}{q} \ge \frac{4}{3} N^2.
  $$

  \paragraph{Case 4: $q-1 < e+f < 2(q-1)$.}

  Let $e+f = q-1+s$ for $1 \le s < q-1$.  In this case,
  Lemma~\ref{lemma:fdist_highend} gives that any non-zero $Z_ij$ has $q
  \cdot (e+f + 2 - q) = q (s+1)$ non-zero entries.  The fraction of
  $Z_{ij}$'s that are non-zero is at least $(1-e/q)(1-f/q) = 1-(e+f)/q +
  ef/q^2$.  Furthermore, since $0 \le e,f \le q-1$ we must have that
  $\min(e,f) \ge s$ so that $e f \ge s (q-1)$.  Hence
  $$
  1-(e+f)/q + ef/q^2 \ge 1 - \frac{q-1+s}{q} + \frac{s(q-1)}{q^2} = \frac{q-s}{q^2}
  $$
  Thus, the total weight of all the $Z_{ij}$'s is lower bounded by
  $$
  N^2 \cdot \frac{q-s}{q^2} \cdot q(s+1) = N^2 \cdot \frac{(q-s)(s+1)}{q} \ge N^2 \cdot \frac{2(q-1)}{q} \ge \frac{4}{3} N^2.
  $$

  \paragraph{Case 5: $e+f = 2(q-1)$. }

  The only remaining case is when $e = f = q-1$.  Now
  Lemma~\ref{lemma:fdist_highend} gives that any non-zero $Z_{ij}$ has
  $q^2$ non-zero entries.  On the other hand, {'a} priori, the
  distance of $Y^{q-1,q-1}$ is as small as $1/q^2$, which seems
  problematic.  However, we still have some leeway: recall that the
  diagonal of $Y^{q-1,q-1}$ should equal $Y^{2(q-1)} = Y^{q-1}$ which
  also happens to be the diagonal of $Y^{q-1,0}$
  (Figure~\ref{fig:cons1}, items 3(b) and 2).  Since $Y^{q-1,0}$ is
  identically $0$ this means that the diagonal of $Y^{q-1,q-1}$ has to
  be zero.  By Lemma~\ref{lemma:zerodiag}, we can then conclude that
  at least a fraction $\frac{1}{q^2} \cdot (1+1/q)$ of the $Z_{ij}$'s
  are non-zero.  As each such $Z_{ij}$ has $q^2$ non-zero entries, we
  see that the total weight of the codeword is at least
  $$
  N^2 \cdot (1+1/q).
  $$
This concludes the proof of Lemma~\ref{lemma:soundness}.
\end{proof}

\subsection{Proof That The Code Is Good}
\label{sect:goodcode}

In this section we prove Lemma~\ref{lemma:goodcode}, that $\C(\Psi)$
is a good code.  After the soundness analysis, this becomes relatively
easy.  To get the bound on the rate of the code, we need the following
simple lower bound on the rate of a certain restricted tensor product
of a code.

\begin{claim}
  \label{claim:tensorrate}
  Let $C \subseteq \F_q^n$ be a linear code and $\tilde{C}$ be the linear
  subspace of $C \tensor C$ where every codeword is restricted to be
  symmetric.  Then $\dim(\tilde{C}) \ge \dim(C)^2/2$.
\end{claim}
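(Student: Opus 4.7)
The plan is to identify $\tilde{C}$ with the space of symmetric $k \times k$ matrices over $\F_q$, where $k = \dim(C)$, and then count. Let $v_1, \ldots, v_k$ be a basis for $C$ and let $V$ be the $n \times k$ matrix whose columns are these basis vectors. Since the columns of $V$ are linearly independent, $V$ has a left inverse $L \in \F_q^{k \times n}$ with $LV = I_k$.

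First I would prove that $C \tensor C$ is precisely the image of the linear map $\phi: \F_q^{k \times k} \to \F_q^{n \times n}$ defined by $\phi(T) = V T V^{\top}$, and that $\phi$ is injective. Injectivity is immediate: if $V T V^{\top} = 0$, then $L (V T V^{\top}) L^{\top} = T = 0$. For the image, note that every column of $V T V^{\top}$ is a linear combination of columns of $V$, hence lies in $C$, and similarly every row lies in $C$, so $\phi(T) \in C \tensor C$. Conversely, if $M \in C \tensor C$ then since each column is in $C$ we can write $M = V A$ for some $A \in \F_q^{k \times n}$, and since each row of $M$ is in $C$, each row of $A = L V A = L M$ is also in $C$ (being a linear combination of rows of $M$), so $A = T V^{\top}$ for some $T \in \F_q^{k \times k}$. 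Hence $M = \phi(T)$.

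Next, I would show that $\phi(T)$ is symmetric if and only if $T$ is symmetric. One direction is trivial. For the other, $\phi(T) = \phi(T)^{\top} = V T^{\top} V^{\top} = \phi(T^{\top})$, and by injectivity of $\phi$ this forces $T = T^{\top}$. Thus $\phi$ restricts to a linear isomorphism between the space of symmetric $k \times k$ matrices over $\F_q$ and $\tilde{C}$.

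Finally, the space of symmetric $k \times k$ matrices over $\F_q$ has a basis consisting of $E_{ii}$ for $1 \le i \le k$ and $E_{ij} + E_{ji}$ for $1 \le i < j \le k$ (these are linearly independent in any characteristic, since $E_{ii}$ is uniquely identified by a nonzero diagonal entry, and $E_{ij}+E_{ji}$ by a nonzero off-diagonal entry). This yields $\dim(\tilde{C}) = k(k+1)/2 \ge k^2/2 = \dim(C)^2/2$, as desired. There is no real obstacle here; the only point requiring mild care is ensuring the count of symmetric matrices is correct in characteristic $2$, which is handled by the explicit basis above.
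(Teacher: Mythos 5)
Your proof is correct and takes essentially the same approach as the paper: both identify $C \tensor C$ with the image of $k \times k$ matrices under $T \mapsto G T G^{\top}$ (with $G$ the generator matrix) and observe that symmetric $T$ yields symmetric output, giving a $k(k+1)/2$-dimensional subspace of $\tilde C$. You merely spell out details the paper leaves implicit, namely injectivity via a left inverse and an explicit basis for symmetric matrices valid in characteristic~$2$.
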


\begin{proof}
  Let $G \in \F_q^{n \times k}$ be the generator matrix of $C$, where
  $k = \dim(C)$.  It is easy to check that the generator matrix of $C
  \tensor C$ is $G \tensor G \in \F_q^{n^2 \times k^2}$.  We think of
  $G \tensor G$ as mapping a $k \times k$ matrix $X$ to an $n \times
  n$ matrix $Y = (G \tensor G) X$ where
  $$
  Y_{i_1, i_2} = \sum_{j_1, j_2 \in [k]} g_{i_1,j_1} g_{i_2,j_2} X_{j_1,j_2}.
  $$
  It is easily verified that if $X$ is symmetric then so is $Y$, so
  the dimension of $\tilde{C}$ is at least the dimension of the space
  of symmetric $k \times k$ matrices over $\F_q$, which equals
  $\frac{k(k+1)}{2} \ge k^2/2$
\end{proof}

We can now prove that $\C(\Psi)$ is a good code.

\begin{proof}[Proof of Lemma~\ref{lemma:goodcode}]
  Let us first consider the distance of $\C(\Psi)$.  In
  Lemma~\ref{lemma:soundness}, it is shown that any codeword for which
  $Y_0 = 0$ has at least $N^2(1+1/q) \ge N^2$ non-zero entries.  On
  the other hand, if $Y_0 \ne 0$ each $Z_{ij}$ and $S_{ijk}$ must have
  at least one non-zero entry, for a total of $N^2 + rm \ge N^2$
  non-zero entries.

  It remains to prove that $\C(\Psi)$ has large dimension, which
  requires a little more work.  Let $\alpha \in \F_q^n$ and assign
  every matrix $Y^{e,f}$ except $Y^{q-1,q-1}$ according to the intent
  of Figure~\ref{fig:intent}.  I.e., for $(e,f) \ne (q-1,q-1)$ we set
  $Y^{ef}(i,j) = C(\alpha)_i^e C(\alpha)_j^f$.

  We shall show that there are still $q^{\Omega(N^2)}$ ways to choose
  $Y^{q-1,q-1}$ so that the resulting set of values satisfy the basic
  constraints of Figure~\ref{fig:cons1}.  Then, from the
  invertibility of Equations~\eqref{eq:sijk} and \eqref{eq:y-zij} of
  Figure~\ref{fig:cons2}, it follows that each of these
  $q^{\Omega(N^2)}$ ways to choose $Y^{q-1,q-1}$ extends to a unique
  codeword of $\C(\Psi)$.

  By Claim~\ref{claim:tensorrate}, the space of matrices
  $Y^{q-1,q-1}$ satisfying Items 3(a) and 5 of Figure~\ref{fig:cons1}
  has dimension at least $\dim(P_{q-1})^2/2 \ge n^{2(q-1)}/2 =
  \Omega(N^2)$ (recall that $N = O(n^{q-1})$).  The only additional
  constraint on $Y^{q-1,q-1}$ is Item 3(b) of Figure~\ref{fig:cons1},
  that the diagonal has to be $Y^{2(q-1)} = Y^{q-1}$.  However, this
  can reduce the dimension by at most $N$, so the remaining dimension
  is still $\Omega(N^2)$.

\end{proof}

\section{Combinatorial Lemmas}
\label{sect:lemma_proofs}

In this section we prove the combinatorial lemmas used in the proof of
Lemma~\ref{lemma:soundness}.

\newtheorem*{lemma_fdist_lowend}{Lemma~\ref{lemma:fdist_lowend} restated}
\begin{lemma_fdist_lowend}
  Suppose $f: \F_q \times \F_q \rightarrow \F_q$ is a non-zero
  function satisfying
  $$
    \sum_{x,y \in \F_q} x^ay^b f(x, y) = 0
    $$ for every $(a,b)$ such that $0 \le a, b \le q-1$ and $a+b < d$ for some
    $0 \le d \le q-1$.  Then $f(x,y) \ne 0$ for at least $d+1$ points in
    $\F_q^2$.
\end{lemma_fdist_lowend}

\begin{proof}
  Let $X = \{\,x\,:\, \exists y \, f(x,y) \ne 0\,\}$ and $Y =
  \{\,y\,:\, \exists x\, f(x,y) \ne 0\,\}$.  Without loss of
  generality, assume that $|X| \ge |Y|$.
  Define $g: \F_q \rightarrow \F_q$ by
  $$
  g(x) = \sum_{y \in \F_q} f(x,y).
  $$
  First suppose $g$ is non-zero.  Then we use the fact that
  $$
  \sum_{x} x^a g(x) = \sum_{x,y} x^a y^0 f(x, y) = 0
  $$
  for every $a < d$, which implies that $g$ has to be non-zero in at
  least $d+1$ points. This is because in the $d \times q$ matrix whose
  rows are $(x^a)_{x \in \F_q}$ for $ 0 \leq a \leq d-1$, any $d$ columns
  form a Vandermonde matrix and hence are linearly independent. We used here
  the fact that $d \leq q-1$.
  Thus $f$ also has to be non-zero in $d+1$
  points and we are done.  Hence we can now assume that $g$ is
  identically $0$.

  Let $|X| = s$ and $|Y| = t$.  Since $g$ is identically $0$, it must
  hold that for any $x \in X$ there are at least two different $y$'s
  such that $f(x,y) \ne 0$, implying that $f$ is non-zero for at least
  $2s$ different points.  We now show that $s+t \ge d+2$ which implies
  that $s \ge \frac{d+2}{2}$ (since we assumed $s \ge t$) so that $f$
  must be non-zero on at least $d+2$ points.

  Consider the Vandermonde matrices
  \begin{align*}
  A_X &= \left( \begin{array}{ccccc}
      1  &x_1  & x_1^2 & \ldots & x_1^{s-1} \\
      1  &x_2  & x_2^2 & \ldots & x_2^{s-1} \\
      & \vdots&       &  \vdots&  \\
      1 & x_s & x_s^2 & \ldots & x_s^{s-1}
    \end{array}\right)    &
  A_Y &= \left( \begin{array}{ccccc}
      1  &y_1  & y_1^2 & \ldots & y_1^{t-1} \\
      1  &y_2  & y_2^2 & \ldots & y_2^{t-1} \\
      & \vdots&       &  \vdots&  \\
      1 & y_t & y_t^2 & \ldots & y_t^{t-1}
    \end{array}\right),
  \end{align*}
  where $x_1, \ldots, x_s$ are the elements of $X$ and $y_1, \ldots,
  y_t$ are the elements of $Y$. Since $A_X$ and $A_Y$ are
  non-singular, so is $B := (A_X \tensor A_Y)^{\top}$.  The matrix $B$
  is an $st \times st$ matrix such that
  for any $0 \le a < s$, $0 \le b < t$, its  $(a,b)$'th row is
  $(x_i^a y_j^b)_{i \in [s], j \in [t]}$. 

  Since $f$ is not identically zero on $X \times Y$ and $B$ is
  non-singular, the dot product of $f$ restricted to $X \times Y$ with
  some row of $B$ is non-zero, i.e., there exists a row $(a,b)$ such
  that
  $$ 0 \not= \sum_{i \in [s], j \in [t]} f(x_i,y_i) x_i^a y_j^b =
  \sum_{x,y \in \F_q} x^a y^b f(x,y),$$ where for the second equality
  we noted that $f$ is zero outside of $X \times Y$. From the
  hypothesis of the Lemma,
we must have $a + b \ge d$ and therefore
  $s+t \ge a+b+2 \ge d+2$.
\end{proof}

For the next lemma we first have the following easy claim.

\begin{claim}
  \label{claim:powsumvanish}
  Let $0 \le a \le q-2$. Then $\sum_{x \in \F_q} x^a = 0$.
\end{claim}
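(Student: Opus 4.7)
The plan is to split into two cases based on whether $a=0$ or $a \geq 1$, using elementary properties of the multiplicative group $\F_q^*$.

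For the case $a = 0$, the sum is $\sum_{x \in \F_q} 1 = q \cdot 1$, which is $0$ in $\F_q$ since $\F_q$ has characteristic $p$ dividing $q$. (Here I adopt the convention $0^0 = 1$, which is the natural convention when $x^a$ is interpreted as the constant function $1$.)

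For the case $1 \le a \le q-2$, my plan is to exploit the fact that $\F_q^*$ is a cyclic group of order $q-1$, so there exists a generator $g \in \F_q^*$. The key observation is that since $1 \le a \le q-2$, the element $g^a$ is not equal to $1$ (as $a$ is not a multiple of $q-1$). I would then compute
\[
\sum_{x \in \F_q} x^a \;=\; \sum_{x \in \F_q^*} x^a \;=\; \sum_{k=0}^{q-2} g^{ak} \;=\; \frac{g^{a(q-1)}-1}{g^a - 1},
\]
where the last equality uses the standard geometric series formula (valid because the denominator $g^a - 1$ is non-zero). Since $g^{q-1} = 1$, the numerator vanishes, so the whole sum is $0$.

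Alternatively (and perhaps more slickly), I can use a substitution argument that avoids explicitly invoking a generator: pick any $c \in \F_q^*$ with $c^a \ne 1$ (such $c$ exists precisely because $1 \le a \le q-2$, so the map $x \mapsto x^a$ is not identically $1$ on $\F_q^*$). The substitution $x \mapsto cx$ is a bijection of $\F_q$, giving $\sum_x x^a = \sum_x (cx)^a = c^a \sum_x x^a$, hence $(1 - c^a)\sum_x x^a = 0$, and dividing by $1-c^a \ne 0$ yields the claim. The main (very minor) subtlety is simply justifying the existence of such a $c$, which is immediate from the cyclic structure of $\F_q^*$.
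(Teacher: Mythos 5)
Your main argument is correct and is essentially identical to the paper's proof: both handle $a=0$ by characteristic (the paper simply calls it trivial) and handle $1\le a\le q-2$ by picking a generator $g$ of $\F_q^*$, observing $g^a\ne 1$, and summing the geometric series $\sum_{k=0}^{q-2}(g^a)^k=\frac{g^{a(q-1)}-1}{g^a-1}=0$. The substitution variant you append ($x\mapsto cx$ with $c^a\ne1$) is also valid and slightly cleaner, but it is not what the paper does.
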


\begin{proof}
  The case $a = 0$ is trivial.  For $a > 0$, let $g$ be a generator
  for $\F_q$ and define $h = g^a$.  Since $1 \le a \le q-2$ we have $h \ne 1$
  and by Fermat's little theorem we have $h^{q-1} = 1$.  Thus we have
  $$
  \sum_{x \in \F_q} x^a = \sum_{i=0}^{q-2} (g^i)^a = \sum_{i=0}^{q-2}
  h^i = \frac{h^{q-1}-1}{h-1} = 0.
  $$
\end{proof}

Now we prove the second lemma.

\newtheorem*{lemma_fdist_highend}{Lemma~\ref{lemma:fdist_highend} restated}
\begin{lemma_fdist_highend}
  Suppose $f: \F_q \times \F_q \rightarrow \F_q$ is a non-zero
  function satisfying
  \begin{equation}
  \label{eqn:lowdegvanish}
    \sum_{x,y \in \F_q} x^a y^b f(x, y) = 0
  \end{equation}
  for every $(a,b)$ such that $0 \leq a, b \leq q-1$ and $a+b < d$
  for some $q-1 \leq d \leq 2(q-1)$.   Then
  $f(x,y) \ne 0$ for at least $q(d+2-q)$ points in $\F_q^2$.
\end{lemma_fdist_highend}

\begin{proof}
  Let
  \begin{eqnarray*}
    S &=& \{\,(a,b)\,:\ 0 \leq a,b \leq q-1, \ a+b < d\,\} \\
    T &=& \{\,(e,\ell)\,:\ 0 \leq e, \ell \leq q-1, \ e+\ell \le 2(q-1)-d\,\}
  \end{eqnarray*}
  Note that $|S| + |T| = q^2$ since the mapping $(e,\ell) \mapsto
  (q-1-e,q-1-\ell)$ forms a bijection from $T$ to $\{0,1,\ldots,q-1\}^2 \setminus S$.

  Now, the functions $f$ satisfying \eqref{eqn:lowdegvanish} for every
  $(a,b) \in S$ form a linear subspace $V$ of $\F_q^{q^2}$ of
  dimension $q^2-|S| = |T|$.

  We identify the following basis for $V$: for every $(e,\ell) \in T$,
  let $g_{e\ell}(x,y) = x^e y^\ell$.  It is clear that the
  $g_{e\ell}$'s are linearly independent (since they are a subset of
  the standard polynomial basis for functions $\F_q^2 \rightarrow
  \F_q$; Fact~\ref{fact:polybasis}) and that $|\{g_{e\ell}\}| = |T| =
  \dim V$, so we only have to check that each $g_{e\ell}$ indeed lies
  in $V$.  We have
  $$
  \sum_{x,y} x^ay^b g_{e\ell}(x,y) = \sum_{x,y} x^{a+e}y^{b+\ell} = \left( \sum_{x} x^{a+e} \right) \cdot \left(\sum_{y} y^{b+\ell} \right).
  $$
  By Claim~\ref{claim:powsumvanish}, we see that this vanishes if
  either $a+e < q-1$ or $b+\ell < q-1$.  But this must hold, since
  otherwise we would have $(a+b)+(e+\ell) \ge 2(q-1)$ contradicting that
  $(a,b) \in S$ and $(e,\ell) \in T$.

  From this we can conclude that any function $f: \F_q^2 \rightarrow
  \F_q$ satisfying condition (\ref{eqn:lowdegvanish}) can be written
  as a polynomial of total degree at most $2(q-1)-d$.  By the
  Schwarz-Zippel Lemma \ref{lemma:schwarzippel} a non-zero such $f$ can be zero on at most
  a fraction $\frac{2(q-1)-d}{q}$ points of $\F_q^2$ and so $f$ has to
  be non-zero on at least
  $$q^2\left(1 - \frac{2(q-1)-d}{q}\right) = q(d+2-q)$$
  points.
\end{proof}

\bibliographystyle{abbrv}
\bibliography{references}

\end{document}